\newcommand{\dist}{\operatorname{dist}}
\newtheorem{theorem}{Theorem}
\newtheorem{lemma}{Lemma}
\theoremstyle{definition}
\newtheorem{example}{Example}
\newtheorem{definition}{Definition}
\theoremstyle{definition}
\theoremstyle{definition}
\DeclareMathOperator{\Sym}{Sym}
\newcommand{\restrictedto}[1]{|_{#1}}
\def\bE{\mathbb{E}}
\def\bP{\mathbb{P}}
\def\bZ{\mathbb{Z}}
\def\cD{\mathcal{D}}
\def\cM{\mathcal{M}}
\def\cS{\mathcal{S}}
\def\cV{\mathcal{V}}
\def\cX{\mathcal{X}}
\def\wt{\widetilde}
\DeclareMathOperator\TV{\mathrm{TV}}
\providecommand{\keywords}[1]
{
{\small \textbf{\textit{Keywords---}} #1}
}
\begin{document}

\title{Generalized Rainbow Differential Privacy}

\date{}

\author{
Yuzhou Gu\thanks{\texttt{yuzhougu@ias.edu}. School of Mathematics, Institute for Advanced Study, Princeton, NJ 08540}
\and
Ziqi Zhou\thanks{\texttt{zhou@ccs-labs.org}. Telecommunication Networks Group (TKN), Department of Telecommunication Systems, Technische Universit{\"a}t Berlin, 10587 Berlin, Germany}
\and
Onur G{\"u}nl{\"u}\thanks{\texttt{onur.gunlu@liu.se}. Information Coding Division, Department of Electrical Engineering, Link{\"o}ping University, 58183 Link{\"o}ping, Sweden}
\and
Rafael G. L. D'Oliveira\thanks{\texttt{rdolive@clemson.edu}. School of Mathematical and Statistical Sciences, Clemson University, Clemson, SC 29634}
\and
Parastoo Sadeghi\thanks{\texttt{p.sadeghi@unsw.edu.au}. School of Engineering and Technology, The University of New South Wales, Canberra, Australia}
\and
Muriel M\'edard\thanks{\texttt{medard@mit.edu}. Research Laboratory of Electronics (RLE), Massachusetts Institute of Technology, Cambridge, MA 02139}
\and
Rafael F. Schaefer\thanks{\texttt{rafael.schaefer@tu-dresden.de}. Chair of Information Theory and Machine Learning, the BMBF Research Hub 6G-life, the Cluster of Excellence “Centre for Tactile Internet with Human-in-the-Loop (CeTI)”, and the 5G Lab Germany, Technische Universit{\"a}t Dresden, 01062 Dresden, Germany}
}

\maketitle

\begin{abstract}
We study a new framework for designing differentially private (DP) mechanisms via randomized graph colorings, called rainbow differential privacy. In this framework, datasets are nodes in a graph, and two neighboring datasets are connected by an edge. Each dataset in the graph has a preferential ordering for the possible outputs of the mechanism, and these orderings are called rainbows. Different rainbows partition the graph of connected datasets into different regions. We show that if a DP mechanism at the boundary of such regions is fixed and it behaves identically for all same-rainbow boundary datasets, then a unique optimal $(\epsilon,\delta)$-DP mechanism exists (as long as the boundary condition is valid) and can be expressed in closed-form. Our proof technique is based on an interesting relationship between dominance ordering and DP, which applies to any finite number of colors and for $(\epsilon,\delta)$-DP, improving upon previous results that only apply to at most three colors and for $\epsilon$-DP. We justify the homogeneous boundary condition assumption by giving an example with non-homogeneous boundary condition, for which there exists no optimal DP mechanism.

\end{abstract}

\keywords{differential privacy, optimal mechanism, dominance ordering}

\section{Introduction}
Differential privacy (DP) is a general framework that aims to limit the statistical capability of a curious analyst, irrespective of its computational power, in determining whether or not the data of a specific participant was used in response to its query\footnote{DP variants assuming a finite computational power for the adversary have been studied in works including \cite{computationalDP} but are not within the scope of our work.} \cite{dwork2016calibrating, dworkoriginal2006}; see \cite{DworkBook} for a treatment of the subject and \cite{survey_2017} for a survey. Recently, DP is applied in the 2020 US Census \cite{uscensus}, as well as by Apple, Google, and Microsoft~\cite{google,apple,microsoft}.

DP imposes constraints on all neighboring datasets, which traditionally differ only in data from one participant. These constraints are \emph{relative} (specified as ratios of mechanism probability distributions), \emph{local} (specified for neighboring datasets), and \emph{dataset-independent} (agnostic to the underlying data distribution or structure)~\cite{DworkBook}, contributing to the success of DP. However, many DP implementations are agnostic to the actual dataset at hand, which is nonadaptive and undesirable~\cite{individual}. Majority of output perturbation DP mechanisms consider the worst-case query sensitivity between any two neighboring datasets to determine the scale of noise~\cite{DworkBook}. This approach is pessimistic and can negatively affect the query utility~\cite{individual}.

Several solutions are available that improve the query utility. For instance, noise calibration was proposed in \cite{nissim2007smooth} to smooth the sensitivity, but a chosen utility level is then not guaranteed and the mechanism suffers from a heavy tail leading to outliers. Another direction is to relax the DP constraints \cite{individual, blowfish, profile}. For example, \cite{individual} proposed \emph{individual}-DP that defines DP constraints only for \emph{given} datasets and their neighbors. The individual-DP framework destroys the group DP, i.e., DP constraints for non-neighboring datasets are no longer valid.

Recently, \cite{RafnoDP2colorPaper} proposed a method to design dataset-dependent DP mechanisms for binary-valued queries that guarantee optimal utility without weakening the original DP constraints; see also \cite{BinaryOutputDP}. In the model in~\cite{RafnoDP2colorPaper}, each dataset has a true query value (e.g., \texttt{blue} or \texttt{red}) and is represented as a node on a graph with edges, representing neighboring datasets. Moreover, they consider DP mechanisms which act homogeneously at the boundary datasets.\footnote{Boundary datasets are neighbors whose true query value is different from each other.} They then show how these initial constraints can be optimally extended in closed-form for all other datasets, where the probability of giving the truthful query response is maximized by taking into account the distance to the boundary.

The framework in \cite{RafnoDP2colorPaper} was generalized in \cite{zhou2022rainbow} by increasing the number of possible query outputs to three (e.g., \texttt{blue}, \texttt{red}, and \texttt{green} that represent majority votes among three choices). This extension is challenging in several ways. In the binary case, the optimal probability assignment for one color (e.g., \texttt{blue}) automatically determines the whole mechanism. In the multi-color case, this is not possible. Thus, a preferential order of colors at each dataset is assumed to solve this problem, in which a mechanism is defined to be better than another if the preferred colors are output with larger probabilities. When there are at most three colors, it is shown that for a DP mechanism that is homogeneous at the boundary, at most one optimal $\epsilon$-DP mechanism exists, for which a closed-form expression is also given. This result recovers the binary case of \cite{RafnoDP2colorPaper} as a special case.

\subsection{Main Contributions}
In this work, we significantly improve \cite{zhou2022rainbow} by providing a new proof technique that allows us to extend the results to any number of colors and any $(\epsilon,\delta)$-DP requirements. We show that given a valid boundary homogeneous DP mechanism, at most one optimal $(\epsilon,\delta)$-DP mechanism exists, for which we provide a closed-form expression. Our results recover the result of \cite{zhou2022rainbow}, in which there are three colors and $\delta=0$. We note that our definition of optimality of a mechanism is through dominance ordering (see Definition~\ref{def:order_reasonable} below), while in \cite{zhou2022rainbow}, optimality is defined through lexicographic ordering. Because dominance ordering is stronger than lexicographic ordering (i.e., $x\preceq y$ in dominance ordering implies $x\preceq y$ in lexicographic ordering, but two elements comparable in lexicographic ordering are not necessarily comparable in dominance ordering), our optimality result is strictly stronger than~\cite{zhou2022rainbow} even in the ternary and $\delta=0$ case.

At its core, our proof uses an interesting relationship between DP and dominance ordering. Namely, for any $\epsilon,\delta\ge 0$ and any distribution $P$ on an ordered set $\cV$, there exists a unique distribution $Q$ that is $(\epsilon,\delta)$-close to $P$ and that dominates any other distribution $Q'$ that is $(\epsilon,\delta)$-close to $P$; see Section~\ref{sec:line-graph} below for more details. Finally, we justify the homogeneous boundary condition assumption by presenting an example with a non-homogeneous boundary condition, such that there exist valid DP mechanisms, but there are no optimal DP mechanisms; see Example~\ref{eg:no-optimal} below.
\subsection{Organization of the Paper}
In Section~\ref{sec:setting}, we introduce the setting for rainbow DP.
In Section~\ref{sec:opt}, we show that to construct optimal DP mechanisms for general graphs under homogeneous boundary conditions, it suffices to do so for a special class of graphs called line graphs. Furthermore, we show that an optimal DP mechanism may not exist for non-homogeneous boundary conditions, and discuss the relationship between our optimality condition and that of previous work \cite{zhou2022rainbow}.
In Section~\ref{sec:line-graph}, we construct optimal DP mechanisms for line graphs.
In Section~\ref{sec:comp-optimal-dp}, we give explicit formulas for the optimal rainbow DP mechanism and present several examples.
In Section~\ref{sec:discussion}, we summarize our results, discuss related approaches and possible further directions.
\subsection{Notation}
All logarithms in this paper are natural logarithms unless otherwise noted.
For a non-negative integer $n$, we use $[n]$ to denote the set $\{1,\ldots,n\}$. For two integers $n\le m$, we use $[n:m]$ to denote the set $\{n,\ldots,m\}$.
For two distributions $P, Q$ on a measurable space $\cX$, we define their total variation (TV) distance as
\begin{align}
\TV(P, Q) = \sup_{\cS} |P(\cS)-Q(\cS)|,
\end{align}
where $\cS$ goes over measurable subsets of $\cX$.

\usetikzlibrary{positioning}
\tikzset{w/.pic={
    \clip (0,0) circle (0.25);
    \begin{scope}
        \fill[white] (-0.25,0) rectangle (0.25,0.25);
        \fill[white] (-0.25,-0.25) rectangle (0.25,0);
    \end{scope}
    \draw (0,0) circle (0.25);
    \path [draw=white,line width=5pt](-0.25,0)--(0.25,0);}}

\usetikzlibrary{positioning}
\tikzset{brg/.pic={
    \clip (0,0) circle (0.25);
    \begin{scope}
        \fill[blue] (-0.25,0) rectangle (0.25,0.25);
        \fill[green] (-0.25,-0.25) rectangle (0.25,0);
    \end{scope}
    \draw (0,0) circle (0.25);
    \path [draw=red,line width=5pt](-0.25,0)--(0.25,0);}}

\tikzset{bgr/.pic={
    \clip (0,0) circle (0.25);
    \begin{scope}
        \fill[blue] (-0.25,0) rectangle (0.25,0.25);
        \fill[red] (-0.25,-0.25) rectangle (0.25,0);
    \end{scope}
    \draw (0,0) circle (0.25);
    \path [draw=green,line width=5pt]
(-0.25,0) -- (0.25,0);}}

\tikzset{rgb/.pic={
    \clip (0,0) circle (0.25);
    \begin{scope}
        \fill[red] (-0.25,0) rectangle (0.25,0.25);
        \fill[blue] (-0.25,-0.25) rectangle (0.25,0);
    \end{scope}
    \draw (0,0) circle (0.25);
    \path [draw=green,line width=5pt]
(-0.25,0) -- (0.25,0);}}

\tikzset{rbg/.pic={
    \clip (0,0) circle (0.25);
    \begin{scope}
        \fill[red] (-0.25,0) rectangle (0.25,0.25);
        \fill[green] (-0.25,-0.25) rectangle (0.25,0);
    \end{scope}
    \draw (0,0) circle (0.25);
    \path [draw=blue,line width=5pt]
(-0.25,0) -- (0.25,0);}}

\tikzset{gbr/.pic={
    \clip (0,0) circle (0.25);
    \begin{scope}
        \fill[green] (-0.25,0) rectangle (0.25,0.25);
        \fill[red] (-0.25,-0.25) rectangle (0.25,0);
    \end{scope}
    \draw (0,0) circle (0.25);
    \path [draw=blue,line width=5pt]
(-0.25,0) -- (0.25,0);}}

\tikzset{grb/.pic={
    \clip (0,0) circle (0.25);
    \begin{scope}
        \fill[green] (-0.25,0) rectangle (0.25,0.25);
        \fill[blue] (-.25,-0.25) rectangle (0.25,0);
    \end{scope}
    \draw (0,0) circle (0.25);
    \path [draw=red,line width=5pt] (-0.25,0) -- (0.25,0);}}

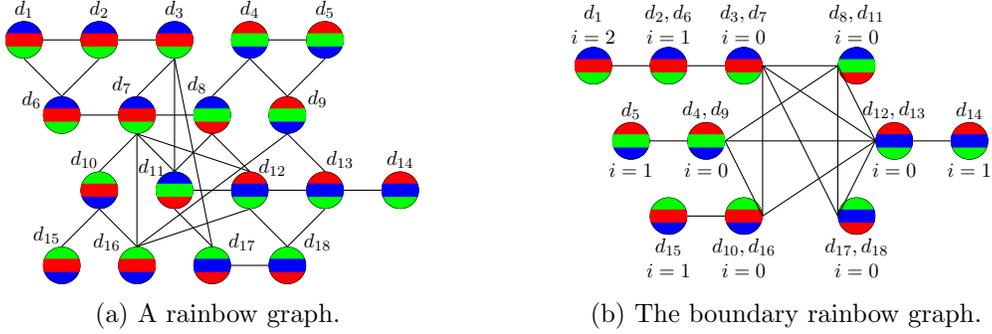
\begin{figure*}[!t]
\centering
\begin{subfigure}[t]{0.45\textwidth}
\centering
\begin{tikzpicture}
    \path pic {brg} node [align=center, above=2mm, scale=0.7, black] {$d_1$};
    \path pic [xshift=1cm] {brg} node [align=center, xshift=1cm, above=2mm, scale=0.7, black] {$d_2$};
    \path pic [xshift=2cm] {brg} node [align=center, xshift=2cm, above=2mm, scale=0.7, black] {$d_3$};
    \path pic [xshift=3cm] {rgb} node [align=center, xshift=3cm, above=2mm, scale=0.7, black] {$d_4$};
    \path pic [xshift=4cm] {rgb} node [align=center, xshift=4cm, above=2mm, scale=0.7, black] {$d_5$};
    \path pic [xshift=0.5cm, yshift=-1cm] {brg} node [align=center, xshift=0.1cm, yshift=-1.2cm, above=2mm, scale=0.7, black] {$d_6$};
    \path pic [xshift=1.5cm, yshift=-1cm] {brg} node [align=center, xshift=1.3cm, yshift=-1.0cm, above=2mm, scale=0.7, black] {$d_7$};
    \path pic [xshift=2.5cm, yshift=-1cm] {bgr} node [align=center, xshift=2.3cm, yshift=-1.0cm, above=2mm, scale=0.7, black] {$d_8$};
    \path pic [xshift=3.5cm, yshift=-1cm] {rgb} node [align=center, xshift=3.9cm, yshift=-1.2cm, above=2.1mm, scale=0.7, black] {$d_9$};
    \path pic [xshift=1cm, yshift=-2cm] {grb} node [align=center, xshift=0.8cm, yshift=-2.0cm, above=2mm, scale=0.7, black] {$d_{10}$};
    \path pic [xshift=2cm, yshift=-2cm] {bgr} node [align=center, xshift=1.7cm, yshift=-2.1cm, above=2mm, scale=0.7, black] {$d_{11}$};
    \path pic [xshift=3cm, yshift=-2cm] {rbg} node [align=center, xshift=3.3cm, yshift=-2.1cm, above=2mm, scale=0.7, black] {$d_{12}$};
    \path pic [xshift=4cm, yshift=-2cm] {rbg} node [align=center, xshift=4.2cm, yshift=-2cm, above=2mm, scale=0.7, black] {$d_{13}$};
    \path pic [xshift=5cm, yshift=-2cm] {rbg} node [align=center, xshift=5cm, yshift=-2cm, above=2mm, scale=0.7, black] {$d_{14}$};
    \path pic [xshift=0.5cm, yshift=-3cm] {grb} node [align=center, xshift=0.3cm, yshift=-3cm, above=2mm, scale=0.7, black] {$d_{15}$};
    \path pic [xshift=1.5cm, yshift=-3cm] {grb} node [align=center, xshift=1.1cm, yshift=-3.1cm, above=2mm, scale=0.7, black] {$d_{16}$};
    \path pic [xshift=2.5cm, yshift=-3cm] {gbr} node [align=center, xshift=2.9cm, yshift=-3.1cm, above=2mm, scale=0.7, black] {$d_{17}$};
    \path pic [xshift=3.5cm, yshift=-3cm] {gbr} node [align=center, xshift=3.9cm, yshift=-3.1cm, above=2mm, scale=0.7, black] {$d_{18}$};
    \draw (0.25,0) -- (0.75,0);
    \draw (1.25,0) -- (1.75,0);
    \draw (3.25,0) -- (3.75,0);
    \draw (0,-0.25) -- (0.5,-0.75);
    \draw (1,-0.25) -- (0.5,-0.75);
    \draw (2,-0.25) -- (1.5,-0.75);
    \draw (2,-0.25) -- (2,-1.75);
    \draw (2,-0.25) -- (2.5,-2.75);
    \draw (3,-0.25) -- (2.5,-0.75);
    \draw (3,-0.25) -- (3.5,-0.75);
    \draw (4,-0.25) -- (3.5,-0.75);
    \draw (0.75,-1) -- (1.25,-1);
    \draw (1.75,-1) -- (2.25,-1);
    \draw (1.5,-1.25) -- (1,-1.75);
    \draw (1.5,-1.25) -- (1.5,-2.75);
    \draw (1.5,-1.25) -- (2,-1.75);
    \draw (1.5,-1.25) -- (3,-1.75);
    \draw (2.5,-1.25) -- (2,-1.75);
    \draw (2.5,-1.25) -- (3,-1.75);
    \draw (3.5,-1.25) -- (1.5,-2.75);
    \draw (3.5,-1.25) -- (4,-1.75);
    \draw (2.25,-2) -- (2.75,-2);
    \draw (3.25,-2) -- (3.75,-2);
    \draw (4.25,-2) -- (4.75,-2);
    \draw (1,-2.25) -- (0.5,-2.75);
    \draw (1,-2.25) -- (1.5,-2.75);
    \draw (3,-2.25) -- (1.5,-2.75);
    \draw (2,-2.25) -- (2.5,-2.75);
    \draw (3,-2.25) -- (3.5,-2.75);
    \draw (4,-2.25) -- (3.5,-2.75);
    \draw (2.75,-3) -- (3.25,-3);
    
\end{tikzpicture}
\caption{A rainbow graph.}
\label{}%
\end{subfigure}
\begin{subfigure}[t]{0.45\textwidth}
\centering
\begin{tikzpicture}
\path pic {brg} node [align=center, above=2mm, scale=0.7, black] {$d_1$\\$i=2$};
\path pic [xshift=1cm] {brg} node [align=center, xshift=1cm, above=2mm, scale=0.7, black] {$d_2,d_6$\\$i=1$};
\path pic [xshift=2cm] {brg} node [align=center, xshift=2cm, above=2mm, scale=0.7, black] {$d_3,d_7$\\$i=0$};
\path pic [xshift=3.5cm] {bgr} node [align=center, xshift=3.5cm, above=2mm, scale=0.7, black] {$d_8,d_{11}$\\$i=0$};
\path pic [xshift=0.5cm, yshift=-1cm] {rgb} node [align=center, xshift=0.5cm, yshift=-1cm, above=2mm, scale=0.7, black] {$d_5$} node [align=center, xshift=0.5cm, yshift=-1cm, below=2mm, scale=0.7, black] {$i=1$};
\path pic [xshift=1.5cm, yshift=-1cm] {rgb} node [align=center, xshift=1.5cm, yshift=-1cm, above=2mm, scale=0.7, black] {$d_4,d_9$} node [align=center, xshift=1.5cm, yshift=-1cm, below=2mm, scale=0.7, black] {$i=0$};
\path pic [xshift=4cm, yshift=-1cm] {rbg} node [align=center, xshift=4cm, yshift=-1cm, above=2mm, scale=0.7, black] {$d_{12},d_{13}$} node [align=center, xshift=4cm, yshift=-1cm, below=2mm, scale=0.7, black] {$i=0$};
\path pic [xshift=5cm, yshift=-1cm] {rbg} node [align=center, xshift=5cm, yshift=-1cm, above=2mm, scale=0.7, black] {$d_{14}$} node [align=center, xshift=5cm, yshift=-1cm, below=2mm, scale=0.7, black] {$i=1$};
\path pic [xshift=1cm, yshift=-2cm] {grb} node [align=center, xshift=1cm, yshift=-2cm, below=2mm, scale=0.7, black] {$d_{15}$\\$i=1$};
\path pic [xshift=2cm, yshift=-2cm] {grb} node [align=center, xshift=2cm, yshift=-2cm, below=2mm, scale=0.7, black] {$d_{10}, d_{16}$\\$i=0$};
\path pic [xshift=3.5cm, yshift=-2cm] {gbr} node [align=center, xshift=3.5cm, yshift=-2cm, below=2mm, scale=0.7, black] {$d_{17}, d_{18}$\\$i=0$};

\draw (0.25,0) -- (0.75,0);
\draw (1.25,0) -- (1.75,0);
\draw (0.75,-1) -- (1.25,-1);
\draw (4.25,-1) -- (4.75,-1);
\draw (1.25,-2) -- (1.75,-2);

\draw  (2.25,0) -- (3.25,0);
\draw  (2.25,0) -- (2.25,-2);
\draw  (2.25,0) -- (3.75,-1);
\draw  (2.25,0) -- (3.25,-2);

\draw  (1.75,-1) -- (3.75,-1);
\draw  (1.75,-1) -- (3.25,0);
\draw  (1.75,-1) -- (2.25,-2);

\draw  (2.25,-2) -- (3.75,-1);

\draw  (3.25,-2) -- (3.25,0);
\draw  (3.25,-2) -- (3.75,-1);

\draw  (3.25,0) -- (3.75,-1);

\end{tikzpicture}
\caption{The boundary rainbow graph.}
\label{fig: rainbow b}
\end{subfigure}
\caption{A rainbow graph and its corresponding boundary graph. A vertex represents a dataset and its neighboring datasets are connected by an edge. The function output space is represented by three colors \texttt{blue}, \texttt{red}, and \texttt{green}. Each dataset has a color preference, represented by the ordering inside the vertex. For instance, vertex $d_1$ prefers \texttt{blue} to \texttt{red} and \texttt{red} to \texttt{green}. We call each such color ordering a rainbow. A DP mechanism is then a probability distribution over colors for every vertex. In (B), we show the boundary rainbow graph of the rainbow graph shown in (A), as described in Definition~\ref{def:boundarymorph} below. In Theorem \ref{thm:reduce-to-line} we show how, for homogeneous boundary conditions (defined in Definition \ref{def:boundaryhomogeneity} below), optimal $(\epsilon,\delta)$-DP mechanisms on (A) can be retrieved from optimal ones on (B). For example, for rainbow $c=(\texttt{red}, \texttt{green},\texttt{blue})$, the vertex $(c,0)$ in the boundary rainbow graph corresponds to datasets $d_4,d_9$ in the original rainbow graph because they are on the boundary of $B^c$ in the original graph. There is an edge between $(c=(\texttt{red}, \texttt{green},\texttt{blue}),0)$ and $(c'=(\texttt{red},\texttt{blue},\texttt{green}),0)$ in the boundary rainbow graph, because there is an edge $(d_9,d_{13})$ in the original rainbow graph, with $i\in B^c$, $m\in B^{c'}$.}
\label{fig: rainbow}
\end{figure*}

\section{Rainbow Differential Privacy} \label{sec:setting}
We denote by $(\mathcal{D}, \sim)$ a family of datasets together with a symmetric neighborhood relationship, where $d,d' \in \mathcal{D}$ are neighbors if $d \sim d'$. We consider a finite output space $\mathcal{V}$. Each dataset $d \in \mathcal{D}$ has an ordered preference for the elements of $\mathcal{V}$, captured by what we call a \emph{rainbow} that represents each preference order.

\begin{definition}
Let $\mathcal{V}$ be a finite output space. A rainbow on $\mathcal{V}$ is a total ordering of $\mathcal{V}$. We denote a rainbow as a permutation vector $c \in \Sym(\mathcal{V})$, where $\Sym(\mathcal{V})$ is the set of all permutations of $\mathcal{V}$.
\end{definition}

The preference of a dataset is captured by the \emph{preference function} $f:\mathcal{D} \rightarrow \Sym(\mathcal{V})$ that assigns a rainbow to each dataset $d \in \mathcal{D}$. Thus, if $f(d) = (\texttt{blue, red, green})$, then it means that the dataset $d \in \mathcal{D}$ prefers \texttt{blue} to \texttt{red} and \texttt{red} to \texttt{green}. Moreover, the goal is to construct a random function $\mathcal{M}:\mathcal{D} ~\to ~\mathcal{V}$ that, for each dataset $d \in \mathcal{D}$, randomly puts out an element of $\mathcal{V}$ such that for a given DP constraint a pre-specified utility function is maximized. As commonly done in the DP literature, we refer to the random function as a \emph{mechanism}. A mechanism is DP if the distribution of its output on neighboring datasets are approximately indistinguishable, as we formalize next.

\begin{definition}[\hspace{1sp}\cite{DworkBook}]\label{def:DPdef}
Let $\epsilon,\delta$ be non-negative real numbers with $\delta \leq 1$.
For two distributions $P$ and $Q$ on $\cV$, we say $P$ and $Q$ are $(\epsilon,\delta)$-close if for any $\cS\subseteq \cV$, we have $P(\cS) \le e^\epsilon Q(\cS) + \delta$ and $Q(\cS) \le e^\epsilon P(\cS) + \delta$. A mechanism $\mathcal{M}: \mathcal{D}\rightarrow \mathcal{V}$ is called $(\epsilon,\delta)$-DP if for any $d \sim d'$, the distributions of $\mathcal{M}(d)$ and $\mathcal{M}(d')$ are $(\epsilon,\delta)$-close.
If $\cM$ is $(\epsilon,0)$-DP, then we also say it is $\epsilon$-DP.
We denote the set of all $(\epsilon,\delta)$-DP mechanisms by $\mathfrak{M}$.
\end{definition}

The performance of a mechanism is measured via a utility function $U:\mathfrak{M} \rightarrow \mathbb{R}$, where $U[\mathcal M] \geq U[\mathcal M']$ means that the mechanism $\mathcal M$ outperforms $\mathcal M'$. In this work, we consider utility functions that agree with the preference function $f:\mathcal{D} \rightarrow \Sym(\mathcal{V})$, i.e., it is preferable for a dataset $d \in \mathcal{D}$ to output a color it prefers according to its rainbow $f(d) \in \Sym(\mathcal{V})$.

\begin{definition} \label{def:order_reasonable}
Let $\preceq$ be the dominance ordering on the probability simplex
\begin{align}
\Delta(\mathcal{V}) = \{x \in [0,1]^{|\mathcal{V}|} : x_1+\cdots+x_{|\mathcal{V}|} =~1\},
\end{align}
i.e., for $x,y\in \Delta(\cV)$, $x\preceq y$ if and only if $x_1+\cdots+x_k \le y_1+\cdots+y_k$ for all $1\le k\le |\cV|$.
For every mechanism $\mathcal{M} \in \mathfrak{M}$ and dataset $d \in \mathcal{D}$, let $\vec{\mathcal{M}}(d) \!\!~\in~\!\! \Delta(\mathcal{V})$ be the vector with coordinates $\vec{\mathcal{M}}(d)_k = \bP[\mathcal{M}(d) = f(d)_k]$. Then, a mechanism $\mathcal{M}\in \mathfrak{M}$ dominates another mechanism $\mathcal{M}'\in \mathfrak{M}$ (denoted by $\cM \succeq \cM'$) if for every dataset $d \in \mathcal{D}$, $\vec{\mathcal{M}}(d) \succeq \vec{\mathcal{M}'}(d)$. Moreover, we say a utility function $U:\mathfrak{M} \rightarrow \mathbb{R}$ is \textit{order reasonable} if whenever a mechanism $\mathcal{M}\in \mathfrak{M}$ dominates another mechanism $\mathcal{M}'\in \mathfrak{M}$, we have $U[\mathcal{M}] \geq U[\mathcal{M}']$.
\end{definition}

The notion of domination in Definition~\ref{def:order_reasonable} induces a partial order on the set $\mathfrak{M}$ of all $(\epsilon,\delta)$-DP mechanisms. When a mechanism $\mathcal M$ dominates $\mathcal M'$, it means that $\mathcal M$ outperforms $\mathcal M'$ for any order reasonable utility. In this setting, we say that a mechanism is \emph{optimal} if no other mechanism dominates it.

An interesting subclass of order reasonable utility functions is the set of functions $U$ of the form $\displaystyle U[\cM] = \bE\left[ \sum_{d\in \cD} u_d(\cM(d))\right]$,
where the expectation is taken over the randomness of the output of the DP mechanism
and where $u_d(\cdot)$ is a monotone function for all $d\in \cD$ in the sense that $u_d(f(d)_i) \ge u_d(f(d)_{i+1})$ for $1\le i\le |\cV|-1$.

As in \cite{RafnoDP2colorPaper}, we represent a family of datasets together with their neighboring relation $(\mathcal{D},\sim)$ by a simple graph,
where the vertices are the datasets in $\mathcal{D}$ and there is an edge between $d,d' \in \mathcal{D}$ if and only if they are neighbors, i.e., $d \sim d'$.

\begin{definition}[\hspace{1sp}\cite{RafnoDP2colorPaper}]\label{def:graphmorphism}
A morphism between $(\mathcal{D}_1, \overset{1}{\sim})$ and $(\mathcal{D}_2, \overset{2}{\sim})$ is a function
\begin{align}
g: (\mathcal{D}_1, \overset{1}{\sim}) \rightarrow (\mathcal{D}_2, \overset{2}{\sim})
\end{align}
such that ${d\overset{1}{\sim}d'}$ implies in either $g(d)\overset{2}{\sim}g(d')$ or $g(d)=g(d')$ for every $d,d' \in \mathcal{D}_1$.
\end{definition}

An example of a morphism is shown in Fig.~\ref{fig: rainbow} above. A morphism $g: (\mathcal{D}_1, \overset{1}{\sim}) \rightarrow (\mathcal{D}_2, \overset{2}{\sim})$ allows to transport $(\epsilon,\delta)$-DP mechanisms from its codomain to its domain.

\begin{theorem}[\hspace{1sp}\cite{RafnoDP2colorPaper}]\label{thm:pullback}
Let $g: (\mathcal{D}_1, \overset{1}{\sim}) \rightarrow (\mathcal{D}_2, \overset{2}{\sim})$ be a morphism and $\mathcal{M}_2: \mathcal{D}_2 \rightarrow \mathcal{V}$ be an $(\epsilon,\delta)$-DP mechanism on $(\mathcal{D}_2, \overset{2}{\sim})$. Then, the mechanism $\mathcal{M}_1: \mathcal{D}_1 \rightarrow \mathcal{V}$ given by the pullback operation $\mathcal{M}_1 = \mathcal{M}_2 \circ g$ is an $(\epsilon,\delta)$-DP mechanism on $\mathcal{D}_1$.
\end{theorem}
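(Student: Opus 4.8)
The plan is to verify the defining condition of $(\epsilon,\delta)$-DP directly from Definition~\ref{def:DPdef}, working edge by edge in the domain graph and pushing each edge through the morphism. Concretely, I would fix an arbitrary neighboring pair $d \overset{1}{\sim} d'$ in $\mathcal{D}_1$ and show that the output distributions of $\mathcal{M}_1(d) = \mathcal{M}_2(g(d))$ and $\mathcal{M}_1(d') = \mathcal{M}_2(g(d'))$ are $(\epsilon,\delta)$-close. Since DP is an edge-local condition, once this holds for every edge of $(\mathcal{D}_1,\overset{1}{\sim})$ the conclusion follows immediately.

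The core step is to invoke the morphism property from Definition~\ref{def:graphmorphism}, which tells us that $d \overset{1}{\sim} d'$ forces either $g(d) \overset{2}{\sim} g(d')$ or $g(d) = g(d')$, and I would split into these two cases. In the first case, $g(d)$ and $g(d')$ are genuine neighbors in $\mathcal{D}_2$, so the assumed $(\epsilon,\delta)$-DP property of $\mathcal{M}_2$ gives directly that $\mathcal{M}_2(g(d))$ and $\mathcal{M}_2(g(d'))$ are $(\epsilon,\delta)$-close, which is exactly what is needed for the pullback. In the second case, $g(d) = g(d')$, so $\mathcal{M}_1(d)$ and $\mathcal{M}_1(d')$ have \emph{identical} distributions; writing $P$ for this common distribution, the two inequalities $P(\mathcal{S}) \le e^\epsilon P(\mathcal{S}) + \delta$ hold trivially for every $\mathcal{S}\subseteq\cV$ because $e^\epsilon \ge 1$ and $\delta \ge 0$, so $P$ is $(\epsilon,\delta)$-close to itself.

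Honestly there is no serious obstacle here: the statement is essentially a bookkeeping consequence of the definitions, and the only thing to be careful about is not to overlook the degenerate case $g(d)=g(d')$, where one must note that a distribution is always $(\epsilon,\delta)$-close to itself rather than silently assuming the neighbors map to distinct neighbors. The content of the theorem is really that the morphism condition was defined precisely so as to make this pullback argument go through, and the proof simply records that observation.
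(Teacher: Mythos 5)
Your proof is correct and is the natural argument; the paper itself does not reprove this theorem (it is imported from the cited prior work), but your case split on $g(d)\overset{2}{\sim}g(d')$ versus $g(d)=g(d')$, together with the observation that any distribution is $(\epsilon,\delta)$-close to itself since $e^\epsilon\ge 1$ and $\delta\ge 0$, is exactly the intended verification.
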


\section{Optimal Rainbow Differential Privacy Mechanisms} \label{sec:opt}

In \cite{RafnoDP2colorPaper}, DP schemes were interpreted as randomized graph colorings. In that setting, each dataset's preference was characterized by a single color. In general, for larger output spaces, each dataset has a corresponding rainbow according to its ordering preference. Thus, we call the triple $(\mathcal{D},\sim,f)$ a \emph{rainbow graph}, where $\mathcal{D}$ is the family of datasets, $\sim$ is the neighborhood relationship, and $f: \mathcal{D} \rightarrow \Sym (\mathcal{V})$ is the preference function. We say a morphism $g\!:\!(\mathcal{D}_1, \overset{1}{\sim},f_1) \!\rightarrow\! (\mathcal{D}_2,\overset{2}{\sim},f_2)$ is \emph{rainbow-preserving} if $f_1 = f_2 \circ g$. Indeed, the morphism in Fig.~\ref{fig: rainbow} above is rainbow-preserving. We consider the following topological notions.
\begin{definition}
Let $(\mathcal{D},\sim,f)$ be a rainbow graph. Then, for every $c \in \Sym(\mathcal{V})$, we denote $B^c = \{ d\in \mathcal{D} : f(d)=c \}$. The interior of $B^c$ is the set
\begin{align}
(B^c)^\circ = \left\{ d\in B^c : d \sim d' \Rightarrow d' \in B^c \right\}
\end{align}
and its boundary is the set
\begin{align}
\partial B^c = B^c - (B^c)^\circ.
\end{align}
\end{definition}

We next study optimal DP mechanisms given a rainbow boundary condition. Since dominance (between DP mechanisms) is a partial order, there exists at most one optimal DP mechanism. In the binary case, it is known that when there exists a valid DP mechanism {on the boundary}, then there exists an optimal DP mechanism {on the whole graph}\cite{RafnoDP2colorPaper}. Surprisingly, this is no longer the case when there are more than two colors, as shown in the following example.

\begin{example} \label{eg:no-optimal}
Let $\cD = \{d_1,d_2,d_3,d_4,d_5\}$, with neighboring relations $d_1\sim d_2\sim d_3\sim d_4\sim d_5\sim d_1$.
Let $e^\epsilon=2$, $\delta=0$. Suppose the output space $\cV = \{1,2,3\}$, and the preference function $f(d_i)=(1,2,3)$ for $1\le i\le 4$ and $f(d_5)
= (1,3,2)$.
Consider the rainbow $c=(1,2,3)$. Then $B^c =\{d_1,d_2,d_3,d_4\}$ with boundary $\partial B^c = \{d_1,d_4\}$ and interior vertices $(B^c)^\circ = \{d_2,d_3\}$. Suppose the boundary condition is such that $\cM(d_1)=(0.2, 0.1, 0.7)$ and $\cM(d_4) = (0.4, 0.1, 0.5)$.
We claim that under this setting, there exist valid DP mechanisms, but there is no optimal DP mechanism.
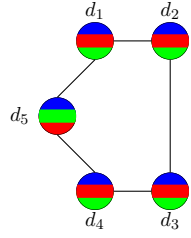
\begin{figure}[t]
\centering
\begin{tikzpicture}
    \path pic [xshift=1cm] {brg} node [align=center, xshift=1cm, above=2mm, scale=0.7, black] {$d_1$};
    \path pic [xshift=2cm] {brg} node [align=center, xshift=2cm, above=2mm, scale=0.7, black] {$d_2$};

    \path pic [xshift=0.5cm, yshift=-1cm] {bgr} node [align=center, xshift=0cm, yshift=-1.4cm, above=2mm, scale=0.7, black] {$d_5$};
   
    \path pic [xshift=1cm, yshift=-2cm] {brg} node [align=center, xshift=1cm, yshift=-2.8cm, above=2mm, scale=0.7, black] {$d_4$};
    \path pic [xshift=2cm, yshift=-2cm] {brg} node [align=center, xshift=2cm, yshift=-2.8cm, above=2mm, scale=0.7, black] {$d_3$};

    \draw (1.25,0) -- (1.75,0);
    \draw (1.0,-0.25) -- (0.5,-0.75);
    \draw (1.0,-1.75) -- (0.5,-1.25);
    \draw (1.25,-2) -- (1.75,-2);
    \draw (2,-1.75) -- (2,-0.25);

\end{tikzpicture}
\caption{\centering Illustration of the rainbow graph in Example \ref{eg:no-optimal}. We use \texttt{blue}, \texttt{red}, \texttt{green} to represent the choices $1,2,3$ respectively.}
\label{fig:my_label}
\end{figure}

First, we define two valid DP mechanisms $\cM_1$ and $\cM_2$. Let $\cM_1(d_2)=\cM_1(d_3)=(0.4, 0.2, 0.4)$, $\cM_2(d_2)=(0.4, 0.1, 0.5)$, and $\cM_2(d_3)=(0.7, 0.05, 0.25)$.
It is straightforward to verify that $\cM_1$ and $\cM_2$ are both valid DP mechanisms.
Suppose, for the sake of illustrating a contradiction, that $\cM_3$ is an optimal DP mechanism.
Because $\cM_3 \succeq \cM_1$, we have $\cM_3(d_2) \succeq \cM_1(d_2)$.
Because of the boundary condition $\cM(d_1)=(0.2, 0.1, 0.7)$, we must have $\cM_3(d_2) = \cM_1(d_2)$.
Because $\cM_3 \succeq \cM_2$, we have $\cM_3(d_3) \succeq \cM_2(d_3)$.
Because of the boundary condition $\cM(d_4)=(0.4, 0.1, 0.5)$, we must have $\cM_3(d_3) = \cM_2(d_3)$.
So we have fully determined $\cM_3$.
However, $\cM_3$ is not a valid DP mechanism because $\cM_3(d_2)_2 > e^\epsilon \cM_3(d_3)_2 + \delta$.
Thus, there exists no optimal DP mechanism.
\end{example}

The main issue in the above example is that the boundary condition is not homogeneous. That is, the conditions on datasets $d_1$ and $d_4$ are different.
We next define a homogeneity condition for DP mechanisms, generalizing \cite{RafnoDP2colorPaper} to non-binary functions.
\begin{definition}\label{def:boundaryhomogeneity}
A mechanism $\mathcal{M}: \mathcal{D} \rightarrow \mathcal{V}$ is \emph{boundary homogeneous} if, for every rainbow $c \in \Sym(\mathcal{V})$, it holds that any two boundary datasets $d,d' \in \partial B^c$ satisfy the condition $\bP[\mathcal{M}(d) = v] = \bP[\mathcal{M}(d') = v]$ for every $v \in \mathcal{V}$.
\end{definition}

We next provide our main result, which shows that under a valid homogeneous boundary condition, there exists a unique optimal DP mechanism and this optimal DP mechanism can be expressed in closed form.

\begin{theorem} \label{thm:optimal-mechanism}
Let $(\mathcal{D},\sim,f)$ be a rainbow graph and, for every rainbow $c \in \Sym(\mathcal{V})$ and $d \in \partial B^c$, let $\vec{m}^c\in \Delta (\mathcal{V})$ be a fixed homogeneous boundary condition.
Suppose that the boundary condition is valid, i.e.,
for every pair $d\sim d'\in \bigcup_{c\in \Sym(\mathcal{V})} \partial B^c$,
we have that $\vec{m}^{f(d)}$ and $\vec{m}^{f(d')}$ are $(\epsilon,\delta)$-close to each other.
Then, there exists a unique optimal $(\epsilon,\delta)$-DP mechanism satisfying the boundary condition.
\end{theorem}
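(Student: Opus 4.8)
The plan is to prove the \emph{stronger} statement that the dominance order $\succeq$ admits a \emph{greatest} element on the set $\mathfrak{M}_{\vec m}$ of all $(\epsilon,\delta)$-DP mechanisms satisfying the boundary condition, i.e. a mechanism $\mathcal{M}^\star$ with $\mathcal{M}^\star \succeq \mathcal{M}$ for every $\mathcal{M}\in\mathfrak{M}_{\vec m}$. This yields the theorem at once: since $\succeq$ is a partial order (in particular antisymmetric, by Definition~\ref{def:order_reasonable}), any $\mathcal{M}\neq\mathcal{M}^\star$ is dominated by the distinct mechanism $\mathcal{M}^\star$ and so is not optimal, whereas $\mathcal{M}^\star$ cannot be dominated by another mechanism $\mathcal{N}$ (from $\mathcal{N}\succeq\mathcal{M}^\star$ and $\mathcal{M}^\star\succeq\mathcal{N}$ we get $\mathcal{N}=\mathcal{M}^\star$). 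Hence $\mathcal{M}^\star$ is the unique optimal mechanism. The construction relies on the line-graph result announced in Section~\ref{sec:line-graph}: for any distribution $P$ on the ordered set $\mathcal{V}$ there is a unique distribution $\lift(P)$ that is $(\epsilon,\delta)$-close to $P$ and dominates every $(\epsilon,\delta)$-close distribution. I would first record the two properties of $\lift$ that the argument needs: (i) $\lift(P)$ is itself $(\epsilon,\delta)$-close to $P$ (immediate from its definition), and (ii) $\lift$ is monotone, $P\preceq P'\Rightarrow\lift(P)\preceq\lift(P')$.

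I then build $\mathcal{M}^\star$ region by region. Fix a rainbow $c$; since every interior vertex of $B^c$ has all its neighbors inside $B^c$, each such vertex lies at a well-defined graph distance $i(v)$ from $\partial B^c$ within $B^c$. Working in the coordinates of the common rainbow $c$, set $\vec{\mathcal{M}^\star}(v)=\lift^{\,i(v)}(\vec m^c)$ for all $v\in B^c$, so that boundary vertices (distance $0$) keep the value $\vec m^c$ and any component of $B^c$ containing no boundary vertex is assigned the top distribution $(1,0,\dots,0)$. By construction $\mathcal{M}^\star$ is boundary homogeneous and satisfies the boundary condition. To see $\mathcal{M}^\star\in\mathfrak{M}_{\vec m}$, note each edge is either internal to some $B^c$ or joins two boundary vertices of distinct regions. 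On an internal edge $v\sim w$ the distances differ by at most one, so the two values are equal (hence close, as $e^\epsilon\ge1$) or consecutive iterates $\lift^{k}(\vec m^c),\lift^{k+1}(\vec m^c)$, which are $(\epsilon,\delta)$-close by property (i). On a cross-region edge the closeness is exactly the hypothesised validity of the boundary condition, stated at the level of distributions on $\mathcal{V}$.

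The heart of the proof is that $\mathcal{M}^\star$ dominates every $\mathcal{M}\in\mathfrak{M}_{\vec m}$, which I would establish by induction on $i(v)$. For $i(v)=0$ both mechanisms equal $\vec m^{f(v)}$. For the inductive step, take an interior $v\in B^c$ at distance $i$ and a neighbor $w\in B^c$ at distance $i-1$; then $f(v)=f(w)=c$, so both vectors live in the same ordering and dominance comparisons are meaningful. DP of $\mathcal{M}$ makes $\vec{\mathcal{M}}(v)$ $(\epsilon,\delta)$-close to $\vec{\mathcal{M}}(w)$, so the defining dominating property of $\lift$ gives $\vec{\mathcal{M}}(v)\preceq\lift(\vec{\mathcal{M}}(w))$; the induction hypothesis $\vec{\mathcal{M}}(w)\preceq\lift^{\,i-1}(\vec m^c)$ together with monotonicity (ii) gives $\lift(\vec{\mathcal{M}}(w))\preceq\lift^{\,i}(\vec m^c)=\vec{\mathcal{M}^\star}(v)$, closing the induction. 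Components without boundary are handled trivially, since the top distribution dominates every element of the simplex.

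I expect the genuine difficulty to sit entirely in the two $\lift$ properties — existence/uniqueness of the dominating close distribution and its monotonicity — which encode the relationship between DP and dominance ordering and form the substance of the line-graph analysis; granting them, the global argument is a short induction plus the partial-order uniqueness remark. The one subtlety to handle with care is the compatibility of orderings: dominance must always be read in the correct rainbow, which is precisely why I confine every $\lift$-based comparison to edges inside a single region $B^c$ (where all vertices share the rainbow $c$) and defer all inter-rainbow interaction to the boundary, where it is controlled by the validity hypothesis rather than by $\lift$. This confinement is legitimate exactly because interior vertices of $B^c$ have no neighbors outside $B^c$, which is the structural fact that powers the whole reduction.
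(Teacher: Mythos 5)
Your proposal is correct and follows essentially the same route as the paper: the same key lemma (the operator $T_{\epsilon,\delta}$ realizing the unique dominating $(\epsilon,\delta)$-close distribution, together with its monotonicity, i.e.\ Lemmas~\ref{lemma:key} and~\ref{lemma:preserve-st}) drives the same induction on distance to the boundary. The only difference is organizational: the paper factors the argument through an explicit boundary rainbow graph and a pullback morphism (Theorems~\ref{thm:reduce-to-line} and~\ref{thm:line-graph}), whereas you run the induction directly on the original graph, which also lets you treat explicitly the edge case of components of $B^c$ with empty boundary that the paper leaves implicit.
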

Proof of Theorem~\ref{thm:optimal-mechanism} and description of the optimal DP mechanism is delayed to Section~\ref{sec:line-graph} below. We next define the notion of a line graph, which is used in the proof.

\begin{definition}
Let $c \in \Sym(\mathcal{V})$ be a rainbow and $n \in \mathbb{N}$. The $(c,n)$-line is the rainbow graph $(\mathcal{D},\sim,f)$ with datasets $\mathcal{D} = [0:n]$,
neighboring relation $i \sim j$ if $|i-j|=1$, and preference function $f(d)=c$ for every $d \in \mathcal{D}$. Dataset $0\in \mathcal{D}$ is considered the boundary of the line.\end{definition}

The other new notion we need for the proof of Theorem \ref{thm:optimal-mechanism} is that of the boundary rainbow graph of a rainbow graph, defined next.

\begin{definition}\label{def:boundarymorph}
Let $(\cD, \sim, f)$ be a rainbow graph. We define its boundary rainbow graph $(\cD_\partial, \overset{\partial}{\sim},f_\partial)$ as follows.
For $c\in \Sym(\cV)$, let
\begin{align}
d_c = \max_{d\in \cD: f(d)=c} \dist(d, \partial B^{c}),
\end{align}
where we define $\dist(x,S) = \min_{y\in \mathcal{S}} \dist(x,y)$.
That is, $d_c$ is the maximum distance of a dataset with preference $c$ to the boundary $\partial B^c$.
Let $\cD_\partial = \{(c,i): c\in \Sym(\cV), i\in [0:~d_c]\}$ and $f_\partial((c,i)) = c$.
Define $(c,i) \overset{\partial}{\sim} (c,i+1)$ for $i\in [0:d_c-1]$, and
$(c,0) \overset{\partial}{\sim} (c',0)$ (for $c,c'\in \Sym(V)$ and $c\ne c'$) if there exists $d\in B^c$, $d'\in B^{c'}$ such that $d\sim d'$.
In other words, for each preference $c$ there is a chain with $d_c+1$ vertices, with $(c,0)$ being the head and $(c,d_c)$ being the tail. There is an edge between two heads $(c,0)$, $(c',0)$ if and only if there are two adjacent datasets, one with preference $c$, the other with preference $c'$.
\end{definition}

Note that the boundary rainbow graph is a union of $(c,d_c)$-lines for $c\in \Sym(V)$ with a few possible additional edges between the endpoints $\{(c,0) : c\in \Sym(\cV)\}$; see Fig.~\ref{fig: rainbow} above. Therefore, the boundary rainbow graph consists of a series of line graphs, each for a different rainbow occurring in the original graph.
We define the boundary morphism $g_\partial: \cD \to \cD_\partial$ by sending $d\in \cD$ to $(f(d), \dist(d, \partial B^{f(d)}))\in \cD_\partial$.
We now show that optimal mechanisms for boundary-homogeneous rainbow graphs can be obtained by pulling them back via Theorem~\ref{thm:pullback} from their boundary rainbow graphs.

\begin{theorem} \label{thm:reduce-to-line}
Let $(\mathcal{D}, \sim, f)$ be a rainbow graph and  $\mathcal{M}_\partial: \mathcal{D}_\partial \rightarrow \mathcal{V}$ be the optimal $(\epsilon,\delta)$-DP mechanism on its boundary rainbow graph subject to fixed boundary probabilities. Then, the pullback $\mathcal{M} = \mathcal{M}_\partial \circ g_\partial$, where $g_\partial$ is the boundary morphism, is the optimal boundary homogeneous $(\epsilon,\delta)$-DP mechanism subject to the same boundary probabilities.
\end{theorem}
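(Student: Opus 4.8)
The plan is to establish three things about the pullback $\mathcal{M} = \mathcal{M}_\partial \circ g_\partial$: that it is a valid $(\epsilon,\delta)$-DP mechanism, that it is boundary homogeneous and respects the prescribed probabilities $\{\vec m^c\}$, and then—the crux—that it dominates every competing valid boundary-homogeneous mechanism. For validity I would first check that $g_\partial$ is a rainbow-preserving morphism, so that Theorem~\ref{thm:pullback} applies and validity comes for free. Rainbow preservation is immediate since $f_\partial(g_\partial(d)) = f(d)$ by construction. For the morphism condition, take $d\sim d'$. If $f(d)=f(d')=c$, then both lie in $B^c$ and, by the $1$-Lipschitz property of distance to a set, $|\dist(d,\partial B^c)-\dist(d',\partial B^c)|\le 1$; hence $g_\partial(d)$ and $g_\partial(d')$ are equal or adjacent along the $c$-chain. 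If $f(d)=c\ne c'=f(d')$, then each vertex has a neighbor of a different rainbow, so $d\in\partial B^c$ and $d'\in\partial B^{c'}$, giving $g_\partial(d)=(c,0)$, $g_\partial(d')=(c',0)$; these heads are adjacent by the definition of $\overset{\partial}{\sim}$, precisely because the edge $d\sim d'$ witnesses an edge between $B^c$ and $B^{c'}$.

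Boundary homogeneity of $\mathcal{M}$ and its agreement with the boundary condition both follow because every $d\in\partial B^c$ satisfies $g_\partial(d)=(c,0)$, so $\mathcal{M}(d)=\mathcal{M}_\partial(c,0)=\vec m^c$, a value depending only on $c$. The next structural step is to record that, once the heads $\{\vec m^c\}$ are pinned, the boundary rainbow graph decouples: its only inter-rainbow edges join heads $(c,0)$, whose values are fixed, while each tail chain $(c,0)\sim(c,1)\sim\cdots\sim(c,d_c)$ is otherwise isolated. Consequently the restriction of the optimal $\mathcal{M}_\partial$ to each $c$-chain must itself be the optimal mechanism on the $(c,d_c)$-line with head $\vec m^c$; otherwise replacing it by the optimal line mechanism (which exists by the line-graph results of Section~\ref{sec:line-graph}, and leaves the pinned head unchanged) would yield a valid mechanism on $\mathcal{D}_\partial$ strictly dominating $\mathcal{M}_\partial$, contradicting its optimality. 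Thus $\vec{\mathcal{M}}(d)=\vec{\mathcal{M}_\partial}(c,i)$ is the optimal line value at distance $i$ from the head $\vec m^c$, and I would further invoke the \emph{length-independence} of this value: it is obtained by iterating $i$ times the unique dominant $(\epsilon,\delta)$-close operation from the introduction, hence does not depend on the chain length beyond $i$.

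For the domination itself, let $\mathcal{M}'$ be any valid boundary-homogeneous mechanism with the same boundary probabilities, and fix $d$ with $g_\partial(d)=(c,i)$. I would choose a shortest path $d=v_i\sim v_{i-1}\sim\cdots\sim v_0$ to $\partial B^c$; a short lemma shows such a geodesic stays inside $B^c$, since the last $B^c$-vertex before leaving would already lie in $\partial B^c$ and be closer. Restricting $\mathcal{M}'$ to this path gives a valid $(\epsilon,\delta)$-DP mechanism on the $(c,i)$-line whose head $v_0\in\partial B^c$ carries the value $\vec m^c$. By line optimality it is dominated pointwise by the optimal line mechanism, so at position $i$ we get $\vec{\mathcal{M}}(d)=\vec{\mathcal{M}_\partial}(c,i)\succeq\vec{\mathcal{M}'}(v_i)=\vec{\mathcal{M}'}(d)$. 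Since $d$ was arbitrary, $\mathcal{M}\succeq\mathcal{M}'$, and as $\mathcal{M}$ is itself a valid boundary-homogeneous mechanism with the prescribed boundary probabilities, it is the unique optimal one.

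I expect the main obstacle to be this optimality step, and specifically the interplay of two facts resting on the line-graph analysis: the decoupling that identifies $\vec{\mathcal{M}_\partial}(c,i)$ with the optimal line value, and the length-independence of that value. Without length-independence one only obtains a domination in the wrong direction when comparing lines of different lengths, so it is essential that the optimal value at a vertex be determined locally by iterating the dominant-closeness operation outward from the head. The geometric lemma that geodesics to $\partial B^c$ remain within $B^c$ is routine but indispensable, since it is exactly what lets an arbitrary competitor be reinterpreted as a line mechanism carrying the correct boundary value $\vec m^c$.
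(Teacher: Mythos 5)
Your proposal is correct and follows essentially the same route as the paper: pull back $\mathcal{M}_\partial$ along the boundary morphism to get validity and homogeneity, then dominate any competitor by restricting it to a geodesic from $d$ to $\partial B^{f(d)}$ and comparing with the optimal line mechanism. In fact you spell out several steps the paper leaves implicit (that $g_\partial$ is a morphism, that geodesics to $\partial B^c$ stay in $B^c$, and the decoupling/length-independence argument behind the claim that $\mathcal{M}_\partial$ restricted to a chain segment is optimal there), so no changes are needed.
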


\begin{proof}
From Theorem \ref{thm:pullback}, it follows that the morphism $g_\partial: \mathcal{D} \rightarrow \mathcal{D}_{\partial}$ induces an $(\epsilon,\delta)$-DP mechanism on $\mathcal{D}$ defined by $\mathcal{M}_{\partial}\circ~g_\partial$. This mechanism is clearly boundary homogeneous.
Let $\mathcal{M}$ be a valid $(\epsilon,\delta)$-DP mechanism satisfying the boundary condition.
We prove that $\cM \preceq \mathcal{M}_{\partial} \circ g_\partial$.

Let $B^c \subseteq \mathcal{D}$ be the subset of datasets with the same preference function.
Let $d_0$ be the closest dataset in $\partial B^c$ to $d\in
B^c$. Let $G = \{d, d_{\dist(d,\partial B^c)-1}, \ldots, d_0 \}$ be a set of datasets which forms a shortest path from $d$ to $d_0$. Since $g_{\partial} \restrictedto{G}$ is injective, it has a left inverse, which we denote as $h: g_\partial (G) ~\rightarrow ~G$. However, $h$ is a morphism and, therefore, from Theorem \ref{thm:pullback}, $\mathcal{M}\circ h$ is an $(\epsilon,\delta)$-DP mechanism on $\mathcal{D}_\partial$. Then, since $\mathcal{M}_\partial$ is the optimal mechanism on $\mathcal{D}_\partial$, it follows that $\mathcal{M}_\partial \restrictedto{g_\partial (G)}$ is the optimal mechanism on $g_\partial (G)$. Thus, $\overrightarrow{\mathcal{M}_\partial \circ g_\partial }(d) \succeq \vec{\mathcal{M}}(d) $. %
Because the choice of $d$ is arbitrary, we obtain $\cM \preceq \mathcal{M}_{\partial} \circ g_\partial$, as desired.
\end{proof}

Since a boundary rainbow graph consists of a series of line graphs, the problem of finding optimal mechanisms can be reduced to finding them for line graphs, which is discussed in the next section after comparing the two orderings considered.

\section{Optimal Differentially Private Mechanisms for Line Graphs} \label{sec:line-graph}
In this section, we derive optimal DP mechanisms for line graphs and use this to prove Theorem~\ref{thm:optimal-mechanism}, and describe optimal DP mechanisms for general graphs. In this section, we use the shorthand $q := |\cV|$.

\begin{theorem}\label{thm:line-graph}
For any $(c,n)$-line graph $(\cD,\sim,f)$ with boundary condition $\vec m$, i.e.,
\begin{align}
\bP[\cM(0)=k]=m_k\quad \text{for all}\quad 1\le k\le q,
\end{align}
there exists a unique optimal DP mechanism. Furthermore, under the unique optimal DP mechanism,
$\cM(d)$, $1 \leq d \leq n$, has distribution $T_{\epsilon,\delta}^{d} (\vec m)$, where $T_{\epsilon,\delta}: \Delta(\cV)\to \Delta(\cV)$ maps $p \in \Delta(\cV)$ to $p'\in \Delta(\cV)$ defined as follows:
\begin{align}
p'_k = s'_k - s'_{k-1},
\end{align}
where we have
\begin{align}
s'_k = \min\{1, \min\{e^\epsilon s_k, 1-e^{-\epsilon} (1-s_k)\}+\delta\}\quad \text{for}\quad 1\le k\le q.
\end{align}
In the above, $s_0=s'_0 = 0$ and $s_k = \sum_{1\le i\le k} p_i$ for $1\le k\le q$.
\end{theorem}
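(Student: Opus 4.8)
The plan is to isolate a single one-step operator and then propagate it along the line by induction. Since every vertex shares the rainbow $c$, the dominance order is the same at every vertex, so the whole problem is governed by the map $T_{\epsilon,\delta}$ that sends a boundary-side distribution $p$ to the distribution $p'$ which is $(\epsilon,\delta)$-close to $p$ and dominates every distribution that is $(\epsilon,\delta)$-close to $p$. The heart of the argument is the following structural claim (the ``dominant neighbor'' lemma announced in the introduction): for every $p\in\Delta(\cV)$ there is a unique $p'=T_{\epsilon,\delta}(p)$ that is $(\epsilon,\delta)$-close to $p$ and satisfies $q\preceq p'$ for every $q$ that is $(\epsilon,\delta)$-close to $p$, and moreover $p'$ is given by the stated formula. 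Granting this, I would set $\cM(d)=T_{\epsilon,\delta}^{d}(\vec m)$ and verify optimality by the induction described below.

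To prove the dominant-neighbor lemma I would work with the cumulative coordinates $s'_k=p'(\cS_k)$, where $\cS_k=\{c_1,\dots,c_k\}$ is the prefix of the $k$ most preferred colors, so that dominance $q\preceq p'$ is exactly the statement $q(\cS_k)\le p'(\cS_k)$ for all $k$. The largest value any close $q$ can assign to $\cS_k$ is constrained in two ways: the forward inequality applied to $\cS_k$ gives $q(\cS_k)\le e^\epsilon s_k+\delta$, while the forward inequality $1-s_k=p(\cS_k^{\mathrm c})\le e^\epsilon q(\cS_k^{\mathrm c})+\delta$ applied to the complement $\cS_k^{\mathrm c}$ rearranges into a second upper bound on $q(\cS_k)$; together with the trivial cap at $1$, taking the minimum produces exactly the $s'_k$ in the statement. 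I would then \emph{define} $Q^\ast=p'$ by these maximal cumulative values, check that $s'_k$ is nondecreasing in $k$ (a minimum of nondecreasing functions) with $s'_q=1$, so that $p'$ is a genuine distribution, and finally check that $Q^\ast$ is itself $(\epsilon,\delta)$-close to $p$. Dominance and uniqueness are then immediate: any close $q$ obeys $q(\cS_k)\le s'_k=Q^\ast(\cS_k)$ for all $k$, so $q\preceq Q^\ast$, and a maximum of a partial order is unique.

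With the operator in hand, the line-graph statement follows by a monotone induction. First, $T_{\epsilon,\delta}$ is monotone for $\preceq$: each of the two bounds defining $s'_k$ is nondecreasing in $s_k$, so $p\preceq\tilde p$ implies $T_{\epsilon,\delta}(p)\preceq T_{\epsilon,\delta}(\tilde p)$. The candidate $\cM(d)=T_{\epsilon,\delta}^d(\vec m)$ is a valid $(\epsilon,\delta)$-DP mechanism because consecutive distributions $T^{d}(\vec m)$ and $T^{d+1}(\vec m)=T(T^d(\vec m))$ are close by the lemma. For optimality, let $\cM'$ be any valid mechanism with the same boundary value; I would show $\vec{\cM'}(d)\preceq T^d(\vec m)$ by induction on $d$. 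The base case $d=0$ is the boundary condition. For the step, $\cM'(d+1)$ is close to $\cM'(d)$, so by the lemma $\vec{\cM'}(d+1)\preceq T(\vec{\cM'}(d))$, and by the inductive hypothesis together with monotonicity $T(\vec{\cM'}(d))\preceq T(T^d(\vec m))=T^{d+1}(\vec m)$. Hence $\cM'\preceq\cM$, so $\cM$ dominates every valid mechanism; in particular nothing strictly dominates it, so it is optimal, and a dominating element of a partial order is unique, giving the uniqueness claim.

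The main obstacle is the feasibility half of the dominant-neighbor lemma: verifying that the extremal distribution $Q^\ast$ built from the coordinatewise-maximal prefix sums is actually $(\epsilon,\delta)$-close to $p$. Dominance over close distributions is essentially free once the upper bounds $s'_k$ are identified, but closeness is a condition over \emph{all} subsets $\cS\subseteq\cV$, not merely the prefixes $\cS_k$ and their complements used to derive the bounds. I expect to reduce this to the equivalent single-inequality form $\sum_{v}(Q^\ast(v)-e^\epsilon p(v))_+\le\delta$ together with its reverse, and to exploit the fact that the ``water-filling'' construction piles mass onto the most preferred colors monotonically, so that the worst-case set in each direction is forced to be a prefix or the complement of a prefix. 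Pinning down precisely which of the two branches ($e^\epsilon s_k$ versus $1-e^{-\epsilon}(1-s_k)$) is active at each $k$, and tracking the exact dependence on $\delta$ through the crossover, is the one genuinely delicate computation.
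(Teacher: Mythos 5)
Your overall architecture coincides with the paper's: a one-step ``dominant neighbor'' lemma (the paper's Lemma~\ref{lemma:key}), monotonicity of $T_{\epsilon,\delta}$ under $\preceq$ (Lemma~\ref{lemma:preserve-st}), and the same induction along the line. The induction, the uniqueness argument, and the monotonicity claim are all fine. The gap sits exactly where you flag it --- the one-step lemma --- but it is more serious than a ``delicate computation.'' First, your derivation does not produce the stated $s'_k$: rearranging $1-s_k = p(\cS_k^{\mathrm c}) \le e^\epsilon q(\cS_k^{\mathrm c}) + \delta$ gives $q(\cS_k) \le 1 - e^{-\epsilon}(1-s_k) + e^{-\epsilon}\delta$, whose additive term is $e^{-\epsilon}\delta$, not $\delta$; so for $\delta>0$ your direct subset inequalities give a strictly tighter cap than the second branch of the stated formula, contrary to your claim that they reproduce it exactly. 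Second, the deferred feasibility check is precisely where the plan breaks. Take $q=2$, $p=(1/2,1/2)$, $e^\epsilon=2$, $\delta=0.1$: the formula gives $s'_1=\min\{1,\,0.75+0.1\}=0.85$, i.e.\ $Q^\ast=(0.85,0.15)$, but then $p(\{2\})=0.5 > e^\epsilon Q^\ast(\{2\})+\delta = 0.4$, so $Q^\ast$ is not $(\epsilon,\delta)$-close to $p$ under Definition~\ref{def:DPdef}. Indeed your own correctly computed bound, $q(\cS_1)\le 1-e^{-\epsilon}(1-s_1-\delta)=0.8$ for every feasible $q$, already certifies that a distribution with prefix mass $0.85$ cannot be feasible. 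Carrying out your outline honestly therefore yields a contradiction with part~(1) of the lemma in the regime $\delta>0$ rather than a proof of it; only the $\delta=0$ case goes through as you describe.

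For comparison, the paper does not attempt a worst-case-subset analysis of $Q^\ast$ at all. It proves feasibility by introducing the intermediate distribution $\wt p$ with prefix sums $\wt s_k=\min\{e^\epsilon s_k,\,1-e^{-\epsilon}(1-s_k)\}$, establishing the pointwise ratio bound $e^{-\epsilon}p_k\le \wt p_k\le e^\epsilon p_k$ via the crossover index $k_0$ (so $\wt p$ is $(\epsilon,0)$-close to $p$), and then appealing to $\TV(\wt p,p')\le\delta$; and it proves the dominance half by decomposing any $(\epsilon,\delta)$-close $p''$ as a total-variation-$\delta$ perturbation of some $(\epsilon,0)$-close $p^\sharp$, which yields the (non-tight) bound $s''_k\le \wt s_k+\delta$. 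Neither tool appears in your outline, and note that the paper's own passage from ``$\wt p$ is $(\epsilon,0)$-close and $\TV(\wt p,p')\le\delta$'' to ``$p'$ is $(\epsilon,\delta)$-close'' only delivers $p(\cS)\le e^\epsilon p'(\cS)+e^\epsilon\delta$ in the reverse direction, which is consistent with the counterexample above. To close your proof you must either import that decomposition machinery and reconcile the $e^\epsilon\delta$ versus $\delta$ mismatch, or rederive the operator with the tighter branch $1-e^{-\epsilon}(1-s_k)+e^{-\epsilon}\delta$ and verify feasibility for that corrected map; as written, part~(1) of your dominant-neighbor lemma is unproved (and, for the stated formula with $\delta>0$, false), and without it the induction has nothing to stand on.
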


We next describe the main results we require to prove Theorem~\ref{thm:line-graph}, whose proof is given at the end of this section. Theorem~\ref{thm:line-graph} is used to prove Theorem~\ref{thm:optimal-mechanism}, whose proof is also given at the end of this section. Now, without loss of generality, we assume that $c=(1,\ldots, |\cV|)$, and $\cV = \{1,\ldots,q\}$. Our key lemma is the following.

\begin{lemma} \label{lemma:key}
Given a distribution $p\in \Delta(\cV)$, {$p' = T_{\epsilon,\delta}(p)$ is the} unique distribution such that
\begin{enumerate}
\item $p'$ is $(\epsilon,\delta)$-close to $p$;
\item for any $p''$ that is $(\epsilon, \delta)$-close to $p$, we have $p'\succeq p''$.
\end{enumerate}
\end{lemma}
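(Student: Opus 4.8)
The plan is to recast both requirements in terms of prefix sums and then exhibit the dominating distribution explicitly. Write $s_k(x) = x_1 + \cdots + x_k$ for the prefix sums of $x \in \Delta(\cV)$; by Definition~\ref{def:order_reasonable}, $x \succeq y$ is precisely the statement that $s_k(x) \ge s_k(y)$ for all $1 \le k \le q$. Consequently, requirement~2 says $s_k(p') \ge M_k$ for every $k$, where $M_k := \sup\{s_k(p'') : p'' \text{ is } (\epsilon,\delta)\text{-close to } p\}$, whereas requirement~1 forces $s_k(p') \le M_k$. Hence any $p'$ meeting both conditions must satisfy $s_k(p') = M_k$ for all $k$, which already gives uniqueness: if $p'$ and $p'''$ both satisfied 1 and 2, then each would dominate the other, and antisymmetry of the dominance order would yield $p' = p'''$. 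It therefore remains to (i) compute $M_k$ and match it with the prefix sums $s'_k$ of $T_{\epsilon,\delta}(p)$, and (ii) show that the vector whose prefix sums are $M_k$ is a genuine distribution that is itself $(\epsilon,\delta)$-close to $p$.

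For (i) I would bound $M_k$ from above by feeding the two extremal test sets $\cS = \{1,\dots,k\}$ and its complement $\cS^c = \{k+1,\dots,q\}$ into the closeness inequalities of Definition~\ref{def:DPdef}. The set $\cS$ directly gives $s_k(p'') \le e^\epsilon s_k + \delta$, while $\cS^c$ bounds the tail $1 - s_k(p'')$ from below and thereby gives an upper bound on $s_k(p'')$; together with the trivial bound $s_k(p'') \le 1$, these are exactly the quantities appearing inside the minimum defining $s'_k$. I expect this upper bound to be tight because these two prefix/suffix sets are the ones made extremal by pushing mass toward the preferred colors, so the supremum $M_k$ should be attained at $s'_k$.

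The heart of the argument is (ii), showing that the distribution $p'$ with $p'_k = s'_k - s'_{k-1}$ is valid and close. Validity ($p'_k \ge 0$ and $\sum_k p'_k = 1$) reduces to checking that $k \mapsto s'_k$ is nondecreasing and that $s'_q = 1$, which is a direct monotonicity computation on the formula (both $e^\epsilon s_k$ and $1 - e^{-\epsilon}(1-s_k)$ are nondecreasing in $s_k$, and $s_k$ is nondecreasing in $k$). For closeness I would exploit the piecewise structure the formula imposes on $p'$: on the initial head segment, where the branch $e^\epsilon s_k$ is active, consecutive differences give $p'_k = e^\epsilon p_k$ with the first index carrying the extra $\delta$; on a possible saturated middle where $s'_k = 1$ one has $p'_k = 0$; and on the terminal tail segment, where the other branch is active, one gets $p'_k = e^{-\epsilon} p_k$ with the last active index absorbing the $\delta$ slack.

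With $p'$ in this piecewise form, the two hockey-stick quantities $\sum_v (p'_v - e^\epsilon p_v)_+$ and $\sum_v (p_v - e^\epsilon p'_v)_+$ governing Definition~\ref{def:DPdef} collapse: on the interior of each segment the pointwise ratio between $p'_v$ and $p_v$ lies in $[e^{-\epsilon}, e^\epsilon]$, so the first sum is supported at most at the first index and totals at most $\delta$, while the second is supported on the saturated and tail portion and again totals at most $\delta$ (here the saturation condition is what forces the discarded mass $1 - s_a$ to be at most $\delta$). The step I expect to be the main obstacle is the careful bookkeeping at the single transition index between the head and the tail, together with confirming that a worst-case subset for either direction of closeness is always realized by such a prefix, suffix, or saturated set rather than some mixed set; once this is settled, $p'$ is $(\epsilon,\delta)$-close and has $s_k(p') = s'_k \ge M_k$, so it dominates every close $p''$, establishing requirements 1 and 2 and identifying $p' = T_{\epsilon,\delta}(p)$ as the unique such distribution.
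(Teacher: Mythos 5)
Your overall strategy---reduce dominance to prefix sums, bound $M_k=\sup s_k(p'')$ by testing the two extremal sets $\{1,\dots,k\}$ and $\{k+1,\dots,q\}$, then verify that the candidate with prefix sums $s'_k$ is itself a valid $(\epsilon,\delta)$-close distribution---is essentially the paper's (the paper routes its Step~2 through an auxiliary $p^\sharp$ with $\TV(p'',p^\sharp)\le\delta$ that is $(\epsilon,0)$-close to $p$, whereas you apply the test sets to $p''$ directly, which is cleaner). But there is a genuine gap at exactly the two places you defer. Feeding $\cS^c=\{k+1,\dots,q\}$ into $p(\cS^c)\le e^\epsilon p''(\cS^c)+\delta$ gives $s_k(p'')\le 1-e^{-\epsilon}(1-s_k-\delta)=1-e^{-\epsilon}(1-s_k)+e^{-\epsilon}\delta$, which for $\epsilon,\delta>0$ is \emph{strictly smaller} than the term $1-e^{-\epsilon}(1-s_k)+\delta$ appearing in the definition of $s'_k$. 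So the quantities you obtain are \emph{not} ``exactly the quantities appearing inside the minimum defining $s'_k$'', and your expectation that the upper bound is tight at $s'_k$ cannot be realized: $M_k<s'_k$ whenever the second branch is active and unsaturated.

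Correspondingly, part~(ii) fails: $T_{\epsilon,\delta}(p)$ need not be $(\epsilon,\delta)$-close to $p$. Take $q=3$, $e^\epsilon=2$, $\delta=0.1$, $p=(0.5,0.4,0.1)$. Then $\wt s=(0.75,0.95,1)$ and $s'=(0.85,1,1)$, so $p'=T_{\epsilon,\delta}(p)=(0.85,0.15,0)$, yet $p(\{2,3\})=0.5>0.4=e^\epsilon p'(\{2,3\})+\delta$; here $\sum_v(p_v-e^\epsilon p'_v)_+=0.2=e^\epsilon\delta$, not $\le\delta$, with the excess coming from the saturation index and the saturated tail---precisely the bookkeeping you flagged as the main obstacle. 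This is not merely a defect of your write-up: the lemma as stated fails for $\delta>0$ (the paper's own Step~1 has the same gap, since $\TV(p',\wt p)\le\delta$ together with $\wt p$ being $(\epsilon,0)$-close to $p$ only yields $p(\cS)\le e^\epsilon p'(\cS)+e^\epsilon\delta$ in the reverse direction, and the example above shows the loss is real). To make both the statement and your argument correct, the second branch should read $1-e^{-\epsilon}(1-s_k-\delta)$ rather than $1-e^{-\epsilon}(1-s_k)+\delta$; with that correction your plan (coordinatewise suprema plus achievability via the piecewise hockey-stick analysis) appears to go through, but as written the proof cannot be completed.
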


\begin{proof}
We claim that $p'$ defined above satisfies both conditions of Lemma~\ref{lemma:key}.

\textbf{Step 0.}
We verify that $p'$ is a valid distribution.
Because $s_q=1$, we have $s'_q=1$.
Because $s_k$ is monotone increasing in $k$ and both functions $e^\epsilon x$ and $1-e^{-\epsilon}(1-x)$ are monotone increasing in $x$, $s'_k$ is monotone increasing in $k$.
So $p'$ is a valid distribution.

\textbf{Step 1.}
We verify that $p'$ is $(\epsilon, \delta)$-close to $p$.
We define another distribution $\wt p$.
Let
\begin{align}
\wt s_k = \min\{e^\epsilon s_k, 1-e^{-\epsilon} (1-s_k)\}\quad  \text{for}\quad 1\le k\le q
\end{align}
and
\begin{align}
\wt p_k = \wt s_k-\wt s_{k-1}.
\end{align}
Note that $\wt p$ is a valid distribution and $\TV(\wt p,p')\le \delta$.
It suffices to prove that $\wt p$ is $(\epsilon,0)$-close to $p$.

We first prove that for any $1\le k\le q$, we have
\begin{align}
e^{-\epsilon} p_k \le \wt p_k \le e^\epsilon p_k.
\end{align}
Note that $(e^\epsilon s_k - (1-e^{-\epsilon} (1-s_k)))$ is monotone increasing in $k$.
So there exists $k_0\in \{0,\ldots,q\}$ such that
\begin{align}
\wt s_k = \left\{\begin{array}[]{ll}
e^\epsilon s_k & \text{if } k \le k_0,\\
1-e^{-\epsilon}(1-s_k) & \text{if } k \ge k_0+1.
\end{array}\right.
\end{align}
Then,
\begin{enumerate}
\item for all $k\le k_0$, we have $$\wt p_k = e^\epsilon(s_k-s_{k-1})=e^\epsilon p_k;$$
\item for all $k\ge k_0+2$, we have $$\wt p_k = (1-e^{-\epsilon}(1-s_k))-(1-e^{-\epsilon}(1-s_{k-1}))=e^{-\epsilon} p_k;$$
\item for $k=k_0+1$, we have
\begin{align*}
\wt p_k &= (1-e^{-\epsilon}(1-s_k)) - e^\epsilon s_{k-1} \le e^\epsilon s_k - e^\epsilon s_{k-1} = e^\epsilon p_k
\end{align*}
and
\begin{align*}
\wt p_k &= (1-e^{-\epsilon}(1-s_k)) - e^\epsilon s_{k-1}\\
&\ge (1-e^{-\epsilon}(1-s_k))-(1-e^{-\epsilon}(1-s_{k-1})) \!=\! e^{-\epsilon} p_k.
\end{align*}
\end{enumerate}
This proves that $\wt p$ is $(\epsilon,0)$-close to $p$.

\textbf{Step 2.}
We next prove that for any $p''$ that is $(\epsilon, \delta)$-close to $p$, we have $p'\succeq p''$.
Because $p''$ is $(\epsilon, \delta)$-close to $p$,
there exists a distribution $p^\sharp$ such that $\TV(p'',p^\sharp)\le \delta$ and that $p^\sharp$ is $(\epsilon,0)$-close to $p$. Let $s^\sharp$ be the prefix sum of $p^\sharp$ and $s''$ be the prefix sum of $p''$.

For any $1\le k\le q$, taking $\cS = \{1,\ldots,k\}$ %
in the DP condition (Definition~\ref{def:DPdef}) on $(p,p^\sharp)$, we obtain
\begin{align}
s^\sharp_k \le e^\epsilon s_k.
\end{align}
Taking $\cS = \{k+1,\ldots,q\}$, we have %
\begin{align}
s^\sharp_k \le 1-e^{-\epsilon} (1-s_k).
\end{align}
Thus, we have $s^\sharp_k \le \wt s_k$. Furthermore, because $\TV(p'',p^\sharp)\le \delta$, we have
\begin{align}
s''_k \le \min\{1,s^\sharp_k+\delta\} \le \min\{1,\wt s_k+\delta\} = s'_k.
\end{align}
Because $k$ can be any integer between $1$ and $q$, this proves that $p'\succeq p''$.
\end{proof}

\begin{lemma}\label{lemma:preserve-st}
$T_{\epsilon,\delta}$ preserves dominance ordering, i.e., $p\succeq p' \implies T_{\epsilon,\delta}(p)\succeq T_{\epsilon,\delta}(p')$.
\end{lemma}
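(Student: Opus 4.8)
The plan is to reduce the claim to the monotonicity of a single scalar function, exploiting the fact that both dominance ordering and the map $T_{\epsilon,\delta}$ are most naturally expressed in terms of prefix sums. To avoid the notational clash in the lemma statement (where $p'$ denotes a generic second distribution rather than $T_{\epsilon,\delta}(p)$), I would write $p \succeq \bar p$ for the two input distributions, denote their prefix sums by $s_k = \sum_{i\le k} p_i$ and $\bar s_k = \sum_{i\le k} \bar p_i$, and recall that by Definition~\ref{def:order_reasonable} the hypothesis $p \succeq \bar p$ is precisely the coordinatewise inequality $s_k \ge \bar s_k$ for every $1 \le k \le q$.

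Next I would observe that $T_{\epsilon,\delta}$ acts on these prefix sums through a single scalar map. Writing $\phi(s) = \min\{1, \min\{e^\epsilon s, 1 - e^{-\epsilon}(1-s)\} + \delta\}$, the definition of $T_{\epsilon,\delta}$ in Theorem~\ref{thm:line-graph} says exactly that the prefix sums of $T_{\epsilon,\delta}(p)$ and of $T_{\epsilon,\delta}(\bar p)$ are $\phi(s_k)$ and $\phi(\bar s_k)$ respectively. Hence, since dominance ordering is nothing but the ordering of prefix sums, the desired conclusion $T_{\epsilon,\delta}(p) \succeq T_{\epsilon,\delta}(\bar p)$ is equivalent to the family of inequalities $\phi(s_k) \ge \phi(\bar s_k)$ for all $k$.

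The crux --- and the only thing that actually needs checking --- is that $\phi$ is non-decreasing on $[0,1]$. This is immediate: $s \mapsto e^\epsilon s$ and $s \mapsto 1 - e^{-\epsilon}(1-s)$ are both non-decreasing, the pointwise minimum of non-decreasing functions is non-decreasing, adding the constant $\delta$ preserves monotonicity, and capping via $\min\{1,\cdot\}$ preserves it as well. This is in fact the same monotonicity observation already recorded in Step~0 of the proof of Lemma~\ref{lemma:key}, where it was used to show that $s'_k$ is increasing in $k$. Combining the monotonicity of $\phi$ with the hypothesis $s_k \ge \bar s_k$ yields $\phi(s_k) \ge \phi(\bar s_k)$ for each $k$, which is exactly the claim.

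I do not expect a genuine obstacle here: the entire content is the trivial monotonicity of $\phi$, and the only thing requiring care is bookkeeping --- keeping the two input distributions and their respective prefix sums clearly distinguished, and stating explicitly that dominance ordering is comparison of prefix sums, so that a monotone map applied prefix-sum-wise automatically respects it.
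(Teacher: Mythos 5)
Your proposal is correct and is essentially the paper's own argument, just written out in more detail: the paper's proof is the single observation that $\phi(x)=\min\{1,\min\{e^\epsilon x,\,1-e^{-\epsilon}(1-x)\}+\delta\}$ is monotone increasing, combined implicitly with the fact that dominance ordering is comparison of prefix sums and $T_{\epsilon,\delta}$ acts prefix-sum-wise through $\phi$. Your careful bookkeeping of the two input distributions and their prefix sums fills in exactly what the paper leaves implicit.
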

\begin{proof}
This holds because $\min\{1, \min\{e^\epsilon x, 1-e^{-\epsilon} (1-x)\}+\delta\}$ is a monotone increasing function in $x$.
\end{proof}

Now, we are ready to prove Theorem~\ref{thm:line-graph}.
\begin{proof}[Proof of Theorem \ref{thm:line-graph}]
Let $\cM$ be the DP mechanism defined as in the theorem statement.
By Lemma \ref{lemma:key} part (1), $\cM$ is an $(\epsilon,\delta)$-DP mechanism.
Now let us prove its optimality.
Let $\cM'$ be another $(\epsilon,\delta)$-DP mechanism with the same boundary condition.

Let us use induction on $d$ to prove that $\cM(d) \succeq \cM'(d)$ for all $d\in [0:n]$.
Because of the boundary condition, we have $\cM(0)=\cM'(0)$. In particular, $\cM(0) \succeq \cM'(0)$. This is the base case of our induction.
Now suppose that we have proved $\cM(d)\succeq \cM'(d)$ for some $d\in [0:n-1]$.
Then
\begin{align}
\cM(d+1)=T_{\epsilon,\delta}(\cM(d))\succeq T_{\epsilon,\delta}(\cM'(d)) \succeq \cM'(d+1),
\end{align}
where the first step is by definition of $\cM$, the second step is by Lemma~\ref{lemma:preserve-st} and induction hypothesis, the third step is by Lemma~\ref{lemma:key}.
This completes the induction.
Therefore $\cM(d)\succeq \cM'(d)$ for all $d\in [1:n]$.
\end{proof}

Using Theorem~\ref{thm:line-graph}, we next prove Theorem~\ref{thm:optimal-mechanism}.
\begin{proof}[Proof of Theorem~\ref{thm:optimal-mechanism}]
The proof is straightforwardly obtained by combining the results of Theorems~\ref{thm:reduce-to-line} and \ref{thm:line-graph}.
Theorems~\ref{thm:reduce-to-line} shows that to construct an optimal DP mechanism for a general graph under a homogeneous boundary condition, it suffices to construct an optimal DP mechanism for the boundary rainbow graph, which is achieved by Theorem~\ref{thm:line-graph}.
\end{proof}

By expanding the proof of Theorem~\ref{thm:optimal-mechanism}, we can describe the optimal DP mechanism for general graphs as follows.
For any $c\in \Sym(\cV)$, $d\in B^c$, let $\vec m^c$ be the boundary condition. Define $\wt m^c\in \Delta(\cV)$ as $\wt m^c_{i}=\vec m^c_{c(i)}$ for $1\le i\le q$.
This is a permuted version of $\vec m^c$ such that the preference order is $(1,\ldots,|\cV|)$. Then the optimal DP mechanism is given by $\bP[\cM(d) = c(k)]=\left(T_{\epsilon,\delta}^{\dist(d, \partial B^c)} (\wt m^c)\right)_{k}$, for $1\le k\le q$.
\section{Designing the Optimal Differentially Private Mechanism} \label{sec:comp-optimal-dp}
In this section, we discuss how to design optimal DP mechanisms. Throughout, we consider a $(c,n)$-line with $c=(1,\ldots,|\cV|)$ and boundary condition $\vec m\in \Delta(\cV)$, i.e., $\bP[\cM(0)=k] = m_k$. The case $c\ne (1,\ldots,|\cV|)$ can be handled by performing a permutation, as in the end of Section~\ref{sec:line-graph}. For $t\in [n]$, the distribution of $\cM(t)$ can be straightforwardly computed using the definition of $T_{\epsilon,\delta}$ (see Theorem~\ref{thm:line-graph} above). In fact, it can be expressed in an even more explicit form, as we explain below. %

\subsection{Special Case \texorpdfstring{$\delta=0$}{delta=0}} \label{sec:comp-optimal-dp:zero}

As a warmup, let us first consider the case $\delta=0$.
By Theorem~\ref{thm:line-graph}, for the optimal DP mechanism, the distribution of $\cM(t)$ is equal to $T_{\epsilon,\delta}^{t} (\vec m)$. For $t\in \bZ_{\ge 0}$, let $s^t$ denote the prefix sum of $T_{\epsilon,\delta}^t (\vec m$).
By construction of the operator $T_{\epsilon,\delta}$, we have
\begin{enumerate}
\item if $s_k^{t} \le 1/(e^\epsilon+1)$, then $$s^{t+1}_k = e^\epsilon s^{t}_k;$$
\item if $s_k^{t} \ge 1/(e^\epsilon+1)$, then $$s^{t+1}_k = 1-e^{-\epsilon}(1-s^{t}_k).$$
\end{enumerate}
For $1\le k\le q$, define
\begin{align} \label{eqn:opt-mec-tau-delta-0}
\tau_k =\left\lfloor \max\left\{-\frac{\log (s^0_k \cdot (e^\epsilon+1))}{\epsilon}+1,0\right\}\right\rfloor.
\end{align}
Then we have $s_k^t \le 1/(e^\epsilon+1)$ for $t\le \tau_k-1$ and $s_k^t \ge 1/(e^\epsilon+1)$ for $t \ge \tau_k$. Thus, we obtain as the solution
\begin{align} \label{eqn:opt-mec-closed-form-delta-0}
s^t_k = \left\{\begin{array}[]{ll}
e^{t \epsilon} s^0_k & \text{if } t\le \tau_k,\\
1-e^{-\epsilon(t-\tau_k)}(1-s_k^{\tau_k}) & \text{if } t \ge \tau_k+1.
\end{array}\right.
\end{align}
Then, we obtain $\bP[\cM(t) = k] = s^{t}_k-s^{t}_{k-1}$.

Note that in the case of three colors ($q=3$), our expression recovers the one given in \cite{zhou2022rainbow}.

\subsection{General Case \texorpdfstring{$\delta>0$}{delta>0}} \label{sec:comp-optimal-dp:pos}
Now, we consider the case $\delta>0$. The formula is slightly more complicated, but the derivation method is similar.
Recall that $\vec m$ is the boundary mechanism.
Let $s^t$ denote the prefix sum of $T_{\epsilon,\delta}^t (\vec m$).
Define
\begin{align}
\rho = \frac{\delta}{e^\epsilon-1}.
\end{align}
By construction of the operator $T_{\epsilon,\delta}$, we have
\begin{enumerate}
\item if $s_k^t \le 1/(e^\epsilon+1)$, then $$s^{t+1}_k = \min\{1, e^\epsilon s^{t}_k + \delta\}$$
which is equal to
$$s^{t+1}_k + \rho = \min\{e^\epsilon (s^t_k + \rho), 1+\rho\};$$
\item if $s_k^t \ge 1/(e^\epsilon+1)$, then $$s^{t+1}_k = \min\{1, 1-e^{-\epsilon}(1-s^{t}_k) + \delta\}$$
which is equal to $$1-s^{t+1}_k + e^\epsilon \rho = \max\{e^{-\epsilon}(1-s^t_k + e^\epsilon \rho), e^\epsilon\rho\}.$$
\end{enumerate}

Define
\begin{align} \label{eqn:opt-mec-tau-delta-gt-0}
\tau_k =\left\lfloor \max\left\{\frac{\log \left(\left(\frac 1{e^\epsilon+1} + \rho\right)/\left(s^0_k + \rho\right)\right)}{\epsilon}+1,0\right\}\right\rfloor.
\end{align}
Then we have $s_k^t \le 1/(e^\epsilon+1)$ for $t\le \tau_k$ and $s_k^t \ge 1/(e^\epsilon+1)$ for $t\ge \tau_k+1$. Solving this, we obtain
\begin{align} \label{eqn:opt-mec-closed-form-delta-gt-0}
s^t_k = \left\{\begin{array}[]{ll}
\min\{1, e^{t\epsilon}(s^0_k + \rho) - \rho\} & \text{if } t\le \tau_k,\\
\min\{1,  1+e^\epsilon\rho - e^{-\epsilon(t-\tau_k)} (1-s_k^{\tau_k} + e^\epsilon\rho)\} & \text{if } t \ge \tau_k+1.
\end{array}\right.
\end{align}
Finally, we have $\bP[\cM(t)=k] = s_k^{t} - s_{k-1}^{t}$.

When $\delta=0$, our expressions reduce to the one we derived in Section~\ref{sec:comp-optimal-dp:zero}.

\subsection{Numerical Results}
We depict several examples for optimal DP mechanism designs in Fig.~\ref{fig:ex1},~\ref{fig:ex2}, and~\ref{fig:ex3}. In these examples, we consider $|\cV|=5$, $c=(1,\ldots,5)$, and $\epsilon=\log 1.2$.
We choose the boundary condition to be $$\vec m = (0.0005,0.0081,0.1364, 0.2727, 0.5822)$$ which corresponds to $0.0005 = 0.001\times\frac{e^\epsilon}{1+e^\epsilon}$, $0.0081 = 0.015 \times\frac{e^\epsilon}{1+e^\epsilon}$, $0.1364 = 0.25\times\frac{e^\epsilon}{1+e^\epsilon}$, $0.2727 = 0.5\times\frac{e^\epsilon}{1+e^\epsilon}$, and $0.5822$ is 1 minus all the other values. %

Fig.~\ref{fig:ex1} illustrates the optimal $(\epsilon,\delta)$-DP mechanism for $\delta=0$.
At integer $t$, the figure shows $T_{\epsilon,\delta}^t \vec m$.
At non-integer $t$, the values are interpolated using \eqref{eqn:opt-mec-closed-form-delta-0}.
By \eqref{eqn:opt-mec-closed-form-delta-0}, we observe that for fixed $k$, $s_k^t$ goes through a phase transition at $\tau_k$. For $t\le \tau_k$, $s_k^t$ increases exponentially and for $t\ge \tau_k$, $1-s_k^t$ decreases exponentially, respectively. By \eqref{eqn:opt-mec-tau-delta-0}, $\tau_k$ is non-increasing in $k$. Therefore, the probabilities $\bP[\cM(t)=k] = s_k^{t} - s_{k-1}^{t}$ goes through at most three phases as $t$ increases. We have the following results:
\begin{itemize}
\item in the first phase with $t \le \tau_k$, $s_k^t$ and $s_{k-1}^t$ both increase exponentially;
\item in the second phase with $\tau_k\le t\le \tau_{k-1}$, $s_{k-1}^t$ increases exponentially, while $1-s_k^t$ decreases exponentially;
\item in the third phase with $t\ge \tau_{k-1}$, $1-s_k^t$ and $1-s_{k-1}^t$ both decrease exponentially.
\end{itemize}
When $\tau_k=0$, the first phase is degenerate (i.e., has length $0$);
when $\tau_k=\tau_{k-1}$, the second phase is degenerate; and
when $\tau_{k-1}=\infty$ (i.e., $s_{k-1}^0=0$), the third phase is degenerate.
In Fig.~\ref{fig:ex1}, the first phase for $k=5$ is degenerate (no phase transition), and the third phase for $k=1$ is degenerate.

\begin{figure}[t]
\centering

\includegraphics[scale=0.6]{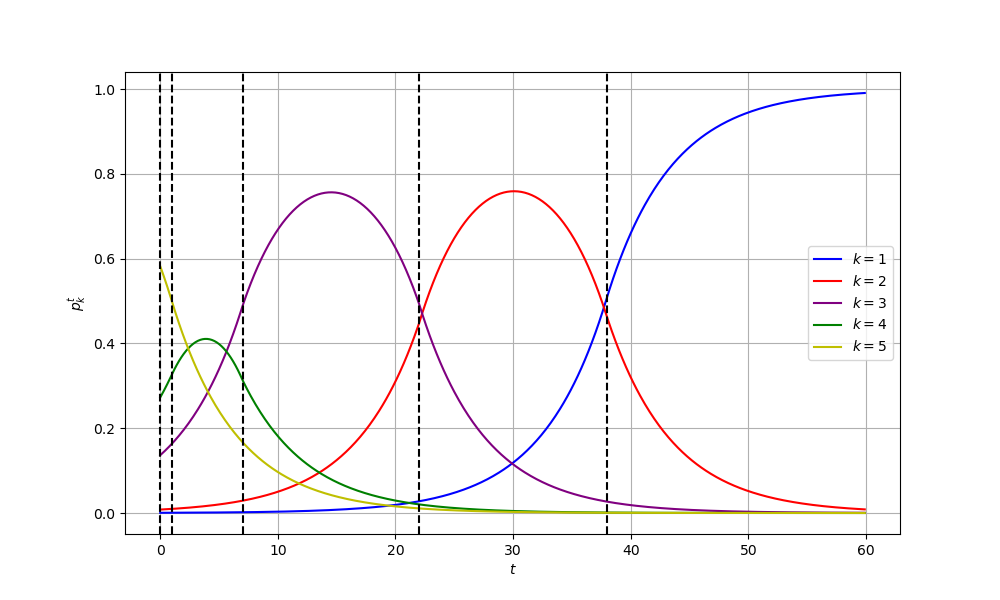}
\caption{The optimal $(\log (1.2),0)$-DP mechanism with homogeneous boundary condition $\vec m =(0.0005,0.0081,0.1364, 0.2727, 0.5822)$. We have $\tau_1= 38, \tau_2= 22, \tau_3= 7, \tau_4= 1,$ and $\tau_5=0$.}%
\label{fig:ex1}
\end{figure}

Secondly, Fig.~\ref{fig:ex2} shows the optimal DP mechanisms with the same parameters as Fig.~\ref{fig:ex1}, but with $\delta = 10^{-3}$.
At integer $t$, the figure shows $T_{\epsilon,\delta}^t \vec m$. At non-integer $t$, the values are interpolated using \eqref{eqn:opt-mec-closed-form-delta-gt-0}.
From \eqref{eqn:opt-mec-closed-form-delta-gt-0} we see that for fixed $k$, $s_k^t$ goes through a phase transition at $\tau_k$.
For $t\le \tau_k$, $s_k^t$ increases exponentially (with a drift); and for $t\ge \tau_k$, $1-s_k^t$ decreases exponentially (with a drift). The phase transition behavior is similar to the $\delta=0$ case. %

\begin{figure}[t]
\centering
\includegraphics[scale=0.6]{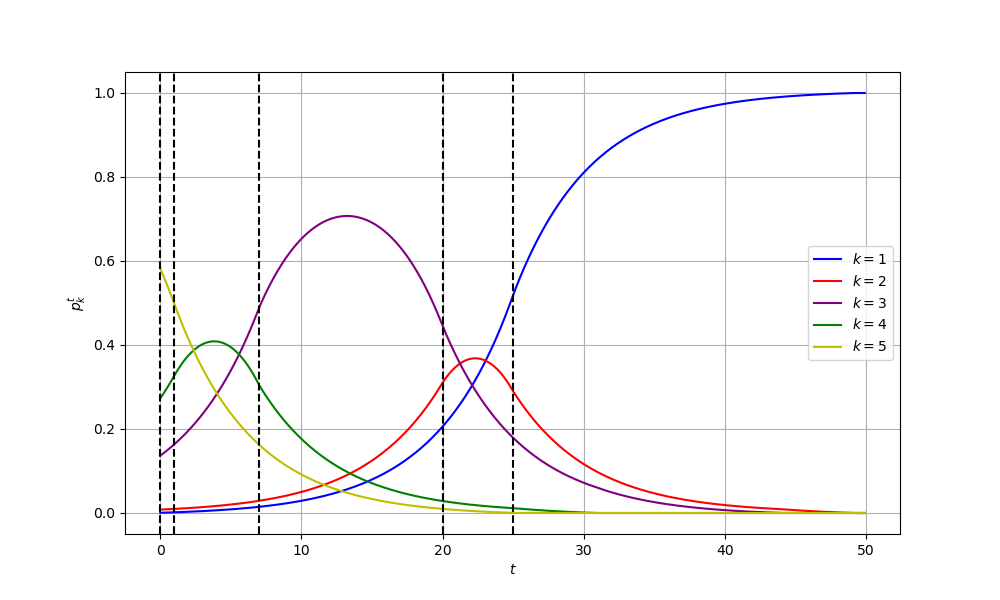}
\caption{The optimal $(\log(1.2),10^{-3})$-DP mechanism with homogeneous boundary condition $\vec m =(0.0005,0.0081,0.1364, 0.2727, 0.5822)$. We have $\tau_1= 25, \tau_2= 20, \tau_3= 7, \tau_4= 1,$ and $\tau_5=0$.}%
\label{fig:ex2}
\end{figure}

Finally, Fig.~\ref{fig:ex3} shows the optimal DP mechanisms with the same parameters as Fig.~\ref{fig:ex1} and Fig.~\ref{fig:ex2}, but with $\delta=0.01$. In this case, the phase transitions happen earlier (i.e., the $\tau$'s are smaller than previous examples).

\begin{figure}[t]
\centering
\includegraphics[scale=0.6]{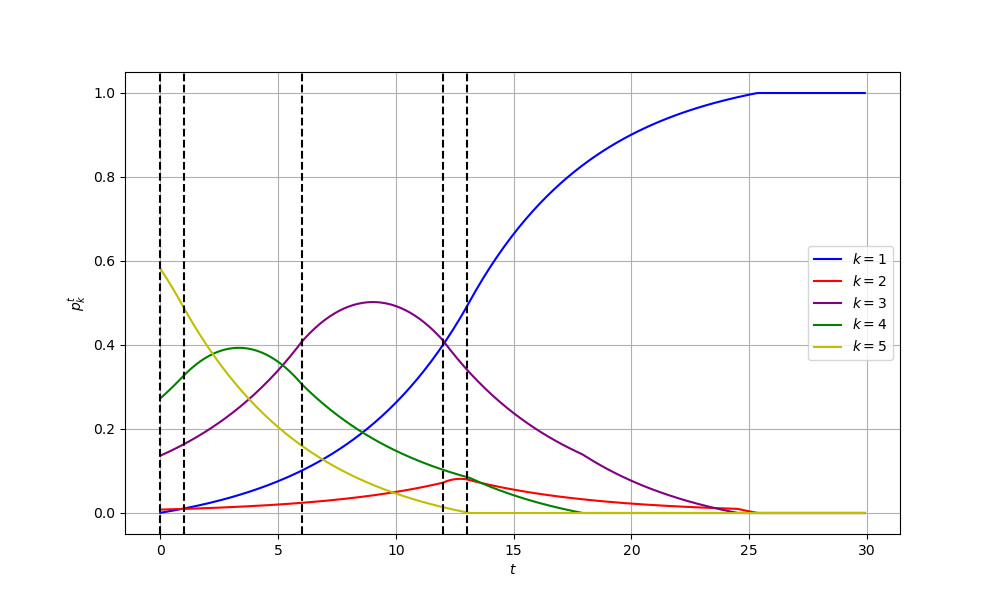}
\caption{The optimal $(\log (1.2),0.01)$-DP mechanism with the homogeneous boundary condition $\vec m =(0.0005,0.0081,0.1364, 0.2727, 0.5822)$. We have
$\tau_1= 13, \tau_2= 12, \tau_3= 6, \tau_4= 1,$ and $\tau_5=0.$}%
\label{fig:ex3}
\end{figure}

\section{Discussions} \label{sec:discussion}
\subsection{Contributions}
In this paper, we presented optimal rainbow DP mechanisms given valid homogeneous boundary conditions for any finite query output sizes by using a new proof technique.

We remark that it is a priori not clear whether an optimal mechanism exists for either homogeneous or non-homogeneous boundary conditions. In fact, our Example~\ref{eg:no-optimal} shows that there exist examples with an inhomogeneous boundary condition where there are valid mechanisms but there is no optimal mechanism. Therefore, we believe it is an interesting and non-trivial result to show that under homogeneous boundary condition, an optimal mechanism exists, and it is optimal under a wide range of utility functions. Furthermore, we give an explicit formula for the optimal mechanism. As shown in previous works \cite{RafnoDP2colorPaper,zhou2022rainbow}, as the number of colors grows, the complexity of the optimal DP mechanism also increases, and it is a priori not clear whether the complexity will be out of reach as the number of colors become larger. Our result shows that the answer is no, as we give a uniform treatment for any number of colors.

\subsection{Dataset dependency}
Our mechanism is a dataset-dependent mechanism.
In general, dataset dependency is desirable from the perspective of DP mechanism design because it allows more room for mechanism design and may lead to improved utility. In our rainbow DP setting, a dataset-independent mechanism would be suboptimal because such a mechanism will not be aware of whether a dataset is close to or far away from the boundary. A dataset-dependent mechanism such as our optimal mechanism can be more aggressive when choosing the output distribution for datasets far away from the boundary. Our result advocates using data-dependent mechanisms in DP mechanism design.

\subsection{Lexicographic vs. Dominance Ordering}
In \cite{zhou2022rainbow}, optimal rainbow DP mechanisms under lexicographic ordering are studied, while we focus on dominance ordering. We briefly discuss the relationship between the two orderings.

On the probability simplex $\Delta(\cV)$, dominance ordering is strictly stronger than lexicographic ordering. Therefore, a rainbow DP mechanism that is optimal under dominance ordering is also optimal under lexicographic ordering. As we show in Theorem~\ref{thm:optimal-mechanism} above, under homogeneous boundary conditions, there exists a unique optimal rainbow DP mechanism under dominance ordering, which implies the same result for lexicographic ordering. However, under non-homogeneous boundary conditions, as we illustrate in Example~\ref{eg:no-optimal} above, there exist scenarios for which there is no optimal rainbow DP mechanism, under either lexicographic ordering or dominance ordering. An interesting question is whether there exist scenarios for which there is an optimal rainbow DP mechanism under lexicographic ordering, but no optimal mechanism under dominance ordering. Our efforts in this direction have not resulted in any such scenario. We know that such an example, if it exists, should have a non-homogeneous boundary condition. However, it seems difficult to control a DP mechanism to be optimal under lexicographic ordering, unlike under dominance ordering.

\subsection{Comparison with exponential mechanism}
We compare the exponential mechanism with our mechanism on the rainbow DP problem. First of all, because our mechanism is optimal, any mechanism produced using the exponential mechanism cannot be better than ours. So the crux of the question is whether our mechanism can be produced using the exponential mechanism. We note that a dataset-independent version cannot produce our mechanism, because our mechanism is dataset-dependent. If we consider dataset-dependent versions, then the question is how to design the score function used in the exponential mechanism. A clever design of the score function could give the same mechanism as ours, but it seems like to find such a clever design, it is necessary to use proofs similar to ours, rather than using the usual proof of utility for exponential mechanisms. Therefore, using exponential mechanism in this setting seems to have little benefit.

It is an interesting question whether our mechanism can be applied to other problems such as median queries. It might be related to a continuous generalization we will discuss below. We would leave this direction for future work.

\subsection{Further directions}
A possible extension to our result is the case of continuous alphabets (e.g., an interval $[0, 1]$). For example, when we have $\cV=[0,1]$ and smaller numbers are preferred over larger numbers, we could replace dominance ordering with a suitable generalization (e.g., first-order stochastic dominance \cite{hadar1969rules}), for which case generalizations of our main results (Theorems~\ref{thm:optimal-mechanism} and \ref{thm:line-graph}) are expected to hold. We leave this direction for further research.

\section*{Acknowledgment}
This work has been supported in part by NSF DMS-1926686, the German Federal Ministry of Education and Research (BMBF) within the National Initiative on 6G Communication Systems through the Research Hub \emph{6G-life} under Grant 16KISK001K, the German Research Foundation (DFG) as part of Germany’s Excellence Strategy – EXC 2050/1 – Project ID 390696704 – Cluster of Excellence \emph{``Centre for Tactile Internet with Human-in-the-Loop'' (CeTI)} of Technische Universit\"at Dresden, the ARC Future Fellowship FT190100429, by the ELLIIT funding endowed by the Swedish government, by the ZENITH Research and Leadership Career Development Award fund, by NSF RINGS-2148132, and NSF CNS-2008624.
We thank anonymous reviewers for helpful suggestions and comments.
We thank Ying Zhao for pointing out a numerical issue in a figure in a previous version of the paper.

\bibliographystyle{ieeetr}
\bibliography{ref}

\vfill

\end{document}